\newtheorem{theorem}{Theorem}[section]
\newtheorem{lemma}[theorem]{Lemma}
\theoremstyle{definition}
\newtheorem{definition}[theorem]{Definition}
\newcommand{\EE}{\mathbb{E}}
\newcommand{\NN}{\mathbb{N}}
\newcommand{\RR}{\mathbb{R}}
\def\cM{{\mathcal M}}
 \DeclareMathOperator{\Lap}{Lap}
\DeclareMathOperator{\argmax}{argmax} 
\title{Distributed Private Heavy Hitters}
\author{Justin Hsu\thanks{Department of Computer and Information Sciences, University of Pennsylvania. Email: {\tt justhsu@cis.upenn.edu}} \and
 Sanjeev Khanna\thanks{Department of Computer and Information Sciences, University of Pennsylvania. Email: {\tt sanjeev@cis.upenn.edu}} \and
 Aaron Roth\thanks{Department of Computer and Information Sciences, University of Pennsylvania. Email: {\tt aaroth@cis.upenn.edu} Supported in part by NSF Awards CCF-1101389 and CNS-1065060.}}
\begin{document}
\maketitle

\begin{abstract}
In this paper, we give efficient algorithms and lower bounds for solving the \emph{heavy hitters} problem while preserving \emph{differential privacy} in the fully distributed \emph{local} model. In this model, there are $n$ parties, each of which possesses a single element from a universe of size $N$.  The heavy hitters problem is to find the identity of the most common element shared amongst the $n$ parties. In the local model, there is no trusted database administrator, and so the algorithm must interact with each of the $n$ parties separately, using a differentially private protocol. We give tight information-theoretic upper and lower bounds on the accuracy to which this problem can be solved in the local model (giving a separation between the local model and the more common centralized model of privacy), as well as computationally efficient algorithms even in the case where the data universe $N$ may be exponentially large.
\end{abstract}

\section{Introduction}
Consider the problem of a website administrator who wishes to know what his most common traffic sources are. Each of $n$ visitors arrives with a single \emph{referring site}: the name of the last website that she visited, which is drawn from a vast universe $N$ of possible referring sites ($N$ here is the set of all websites on the internet). There is value in identifying the most popular referring site (the \emph{heavy hitter}): the site administrator may be able to better tailor the content of his webpage, or better focus his marketing resources. On the other hand, the identity of each individual's referring site might be embarrassing or otherwise revealing, and is therefore private information. We can therefore imagine a world in which this information must be treated ``privately.'' Moreover, in this situation, visitors are communicating directly with the servers of the websites that they visit: i.e. there is no third party who might be trusted to aggregate all of the referring website data and provide privacy preserving statistics to the website administrator. In this setting, how well can the website administrator estimate the heavy hitter while being able to provide formal privacy guarantees to his visitors?

This situation can more generally be modeled as the \emph{heavy hitters} problem under the constraint of \emph{differential privacy}. There are $n$ individuals $i \in [n]$ each of whom is associated with an element $v_i \in N$ of some large data universe $N$. The \emph{heavy hitter} is the most frequently occurring element $x\in N$ among the set $\{v_1,\ldots,v_n\}$, and we would like to be able to identify that element, or one that occurs almost as frequently as the heavy hitter. Moreover, we wish to solve this problem while preserving \emph{differential privacy} in the fully distributed (local) model. We define this formally in section \ref{sec:prelims}, but roughly speaking, an algorithm is differentially private if changes to the data of single individuals only result in small changes in the output distribution of the algorithm. Moreover, in the fully distributed setting, each individual (who can be viewed as a database of size 1) must interact with the algorithm independently of all of the other individuals, using a differentially private algorithm. This is in contrast to the more commonly studied centralized model, in which a trusted database administrator may have (exact) access to all of the data, and coordinate a private computation.

We study this problem both from an information theoretic point of view, and from the point of view of efficient algorithms. We say that an algorithm for the private heavy hitters problem is \emph{efficient} if it runs in time poly$(n, \log N)$: i.e. polynomial in the database size, but only polylogarithmic in the universe size (i.e. in what we view as the most interesting range of parameters, the universe may be exponentially larger than the size of the database). We give tight information theoretic upper and lower bounds on the accuracy to which the heavy hitter can be found in the private distributed setting (separating this model from the private centralized setting), and give several efficient algorithms which achieve good, although information-theoretically sub-optimal accuracy guarantees. We leave open the question of whether \emph{efficient} algorithms can exactly match the information theoretic bounds we prove for the private heavy hitters problem in the distributed setting.

\subsection{Our Results}
In this section, we summarize our results. The bounds we discuss here are informal and hide many of the parameters which we have not yet defined. The formal bounds are given in the main body of the paper.

First, we provide an information theoretic characterization of the accuracy to
which any algorithm (independent of computational constraints) can solve the
heavy hitters problem in the private distributed setting. We say that an
algorithm is $\alpha$-accurate if it returns a universe element which occurs
with frequency at most an additive $\alpha$ smaller than the true heavy hitter.
In the centralized setting, a simple application of the exponential
mechanism~\cite{MT07} gives an $\alpha$-accurate mechanism for the heavy-hitters
problem
where $\alpha = O(\log |N|)$, which in particular, is independent of the number
of individuals $n$. In contrast, we show that in the fully distributed setting,
no algorithm can be $\alpha$-accurate for $\alpha = \Omega(\sqrt{n})$ even in the case in which $|N| = 2$. Conversely, we give an almost matching upper bound (and an algorithm with run-time linear in $N$) which is $\alpha$-accurate for $\alpha = O(\sqrt{n \log N})$.

Next, we consider \emph{efficient} algorithms which run in time only
polylogarithmic in the universe size $|N|$. Here, we give two algorithms. One is
an application of a compressed sensing algorithm of Gilbert et al.~\cite{GLPS09},
which is $\alpha$-accurate for $\alpha = \tilde{O}(n^{5/6} \log^{1/6} N)$.
Then, we give an algorithm based on group-testing using pairwise independent
hash functions, which has an incomparable bound. Roughly speaking, it guarantees
to return the exact heavy hitter (i.e. $\alpha = 0)$ whenever the frequency of
the heavy hitter is larger than the $\ell_2$-norm of the frequencies of the
remaining elements. Depending on how these frequencies are distributed, this can
correspond to a bound of $\alpha$-accuracy for $\alpha$ ranging anywhere between
the optimal $\alpha = O(\sqrt{n})$ to $\alpha = O(n)$.
\subsection{Our Techniques}
Our upper bounds, both information theoretic, and those with efficient
algorithms, are based on the general technique of \emph{random projection} and
\emph{concentration of measure}. To prove our information theoretic upper bound,
we observe that to find the heavy hitter, we may view the private database as a
histogram $v$ in $N$ dimensional space. Then, it is enough to find the index $i
\in [N]$ of the universe element which maximizes $\langle v, e_i\rangle$, where
$e_i$ is the $i$'th standard basis vector. Both $v$ and each $e_i$ have small
$\ell_1$-norm, and so each of these inner products can be approximately
preserved by taking a random projection into $\tilde{O}(\log N)$ dimensional
space. Moreover, we can project each individual's data into this space
independently in the fully distributed setting, incurring a loss of only
$O(\sqrt{n})$ in accuracy. This mechanism, however, is not efficient, because to
find the heavy hitter, we must enumerate through all $|N|$ basis vectors $e_i$
in order to find the one that maximizes the inner product with the projected
database. Similar ideas lead to our efficient algorithms, albeit with worse
accuracy guarantees. For example, in our first algorithm, we apply techniques from compressed sensing to the projected database to recover (approximately) the heavy hitter, rather than checking basis vectors directly. In our second algorithm, we take a projection using a particular family of pairwise-independent hash functions, which are linear functions of the data universe elements. Because of this linearity, we are able to efficiently ``invert'' the projection matrix in order to find the heavy hitter.

Our lower bound separates the distributed setting from the centralized setting by applying an anti-concentration argument. Roughly speaking, we observe that in the fully distributed setting, if individual data elements were selected uniformly $i.i.d.$ from the data universe $N$, then even after conditioning on the messages exchanged with any differentially private algorithm, they remain independently distributed, and approximately uniform. Therefore, by the Berry-Esseen theorem, even after any algorithm computes its estimate of the heavy hitter, the true distribution over counts remains approximately normally distributed. Since the Gaussian distribution exhibits strong anti-concentration properties, this allows us to unconditionally give an $\Omega(\sqrt{n})$ lower bound for any algorithm in the fully distributed setting.
\subsection{Related Work}
Differential privacy was introduced in a sequence of papers culminating in~\cite{DMNS06}, and has since become the standard ``solution concept'' for privacy in the theoretical computer science literature. There is by now a very large literature on this topic, which is too large to summarize here. Instead, we focus only on the most closely related work, and refer the curious reader to a survey of Dwork~\cite{Dwo08}.

Most of the literature on differential privacy focuses on the \emph{centralized}
model, in which there is a trusted database administrator. In this paper, we
focus on the \emph{local} or \emph{fully distributed} model, introduced by
\cite{KLNRS08}. There has been little work in this more
restrictive model--the problems of \emph{learning}~\cite{KLNRS08} and
\emph{query release}~\cite{GHRU11} in the local model are well
understood\footnote{Roughly, the set of concepts that can be \emph{learned} in
the local model given polynomial sample complexity is equal to the set of
concepts that can be learned in the SQ model given polynomial query complexity
\cite{KLNRS08}, and the set of queries that can be \emph{released} in the local
model given polynomial sample complexity is equal to the set of concepts that
can be agnostically learned in the SQ model given polynomial query complexity~\cite{GHRU11},
but the polynomials are not equal.}, but only up to polynomial factors that do
not imply tight bounds for the heavy hitters problem. The \emph{two-party}
setting (which is intermediate between the centralized and fully distributed
setting), in which the data is divided between two databases without a trusted central administrator, was considered by~\cite{MMPRTV10}. They proved a separation between the two-party setting and the centralized setting for the problem of computing the Hamming distance between two strings. In this work, we prove a separation between the fully distributed setting and the centralized setting for the problem of estimating the heavy hitter.

A variant of the private heavy hitters problem has been considered in the
setting of \emph{pan-private streaming algorithms}~\cite{DNPRY10,MMNW11}. This
work considers a different (although related) problem in a different (although
related) setting.~\cite{DNPRY10,MMNW11} consider a setting in which a stream of
elements is presented to the algorithm, and the algorithm must estimate the
\emph{approximate count} of frequently occurring elements (i.e. the number of
``heavy hitters''). In this setting, the universe elements themselves are the
individuals appearing in the stream, and so it is not possible to reveal the
identity of the heavy hitter. In contrast, in our work, individuals are distinct
from universe elements, which merely label the individuals. Moreover, our goal
here is to actually identify a specific universe element which is the heavy
hitter, or which occurs almost as frequently. Also,~\cite{DNPRY10,MMNW11} work in the centralized setting, but demand \emph{pan-privacy}, which roughly requires that the internal state of the algorithm itself remain differentially private. In contrast, we work in the \emph{local privacy} setting which gives a guarantee which is strictly stronger than pan-privacy. Because algorithms in the local privacy setting only interact with individuals in a differentially private way, and never have any other access to the private data, any algorithm in the local privacy model can never have its state depend on data in a non-private way, and such algorithms therefore also preserve pan-privacy. Therefore, our upper bounds hold also in the setting of pan-privacy, whereas our lower bounds do not necessarily apply to algorithms which only satisfy the weaker guarantee of pan-privacy.

Finally, we note that many of the upper bound techniques we employ have been previously put to use in the centralized model of data privacy i.e. random projections~\cite{BLR08,BR11} and compressed sensing (both for lower bounds~\cite{DMT07} and algorithms~\cite{LZWY11}). As algorithmic techniques, these are rarely optimal in the centralized privacy setting. We remark that they are particularly well suited to the fully distributed setting which we study here, because in a formal sense, algorithms in the local model of privacy are constrained to only access the private data using noisy linear queries, which is exactly the form of access used by random linear projections and compressed sensing measurements.

\section{Preliminaries}
\label{sec:prelims}
A database $v$ consists of $n$ records from a data universe $N$, one
corresponding to each of $n$ individuals: for $i \in [n]$, $v^i \in N$ and $v =
\{v^1,\ldots,v^n\}$ which may be a \emph{multiset}. Without loss of generality,
we will index the elements of the data universe from $1$ to $|N|$. It will be
convenient for us to represent databases as \emph{histograms}. In this
representation, $v \in \mathbb{N}^{|N|}$, where $v_i$ represents the number of
occurrences of the $i$'th universe element in the database. Further, we write $v^i \in \mathbb{N}^{|N|}$ for each individual $i \in [n]$, where $v^i_j = 1$ if individual $i$ is associated with the $j$'th universe element, and $v^i_{j'} = 0$ for all other $j' \neq j$. Note that in this histogram notation, we have: $v = \sum_{i=1}^n v^i$. In the following, we will usually use the histogram notation for mathematical convenience, with the understanding that we can in fact more concisely represent the database as a multiset.

Given a database $v$, the \emph{heavy hitter} is the universe element that occurs most frequently in the database: $hh(v) = \arg\max_{i \in N}v_i$. We refer to the frequency with which the heavy hitter occurs as $fhh(v) = v_{hh(v)}$. We want to design algorithms which return universe elements that occur \emph{almost as frequently as the heavy hitter}.
\begin{definition}
An algorithm $A$ is $(\alpha, \beta)$-accurate for the heavy hitters problem if for every database $v \in \mathbb{N}^{|N|}$, with probability at least $1-\beta$: $A(v) = i^*$ such that $v_{i^*} \geq fhh(v) - \alpha$.
\end{definition}

\subsection{Differential Privacy}
Differential privacy constrains the sensitivity of a randomized algorithm to individual changes in its input.
\begin{definition}
An algorithm $A:\mathbb{N}^{|N|}\rightarrow R$ is $(\epsilon,\delta)$-differentially private if for all $v, v' \in \mathbb{N}^{|N|}$ such that $||v-v'||_1 \leq 1$, and for all events $S \subseteq R$:
$$\Pr[A(v) \in S] \leq \exp(\epsilon)\Pr[A(v') \in S] + \delta$$
\end{definition}
Typically, we will want $\delta$ to be negligibly small, whereas we think of $\epsilon$ as being a small constant (and never smaller than $\epsilon = O(1/n)$).

A useful distribution is the \emph{Laplace} distribution.
\begin{definition}[The Laplace Distribution]
The Laplace Distribution (centered at 0) with scale $b$ is the
distribution with probability density function $\textstyle \Lap(x | b)
= \frac{1}{2b}\exp\left( -\frac{|x|}{b}\right)$.
We will sometimes write $\textrm{Lap}(b)$ to denote the Laplace distribution with scale $b$, and will sometimes abuse notation and write $\Lap(b)$ simply to denote a random variable $X \sim \Lap(b)$.
\end{definition}
A fundamental result in data privacy is that perturbing low sensitivity queries with Laplace noise preserves $(\epsilon,0)$-differential privacy.
\begin{theorem}[\cite{DMNS06}]
\label{thm:laplace-privacy}
Suppose $Q : \mathbb{N}^{|N|} \rightarrow \mathbb{R}$ is a function such that for all
databases $v, v' \in \mathbb{N}^{|N|}$ such that $||v-v'||_1 \leq 1$, $|Q(v)-Q(v')|
\leq c$. Then the procedure which on input $v$ releases $Q(v) + X$, where $X$ is a draw from a
$\textrm{Lap}(c/\epsilon)$ distribution, preserves
$(\epsilon,0)$-differential privacy.
\end{theorem}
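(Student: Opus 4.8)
The plan is to verify the $(\epsilon, 0)$-differential privacy guarantee directly, by computing the output density of the mechanism on any pair of neighboring databases and bounding their ratio pointwise by $\exp(\epsilon)$. Because we are aiming for $\delta = 0$, it suffices to establish such a pointwise density-ratio bound: a bound of the form $p_v(t)/p_{v'}(t) \le \exp(\epsilon)$ holding for every output point $t$ integrates immediately over any measurable event $S$ to yield $\Pr[A(v) \in S] \le \exp(\epsilon)\Pr[A(v') \in S]$, which is precisely the definition with $\delta = 0$.

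First I would write down the density of the released value $A(v) = Q(v) + X$, where $X \sim \Lap(c/\epsilon)$. Since adding the deterministic shift $Q(v)$ merely translates the Laplace density, the output of $A(v)$ has density $p_v(t) = \frac{\epsilon}{2c}\exp\left(-\frac{\epsilon|t - Q(v)|}{c}\right)$ at each point $t \in \RR$. Fix any two neighboring databases $v, v'$ with $\|v - v'\|_1 \le 1$, and form the ratio $p_v(t)/p_{v'}(t)$. The normalizing constants $\frac{\epsilon}{2c}$ cancel, leaving $\exp\left(\frac{\epsilon}{c}\bigl(|t - Q(v')| - |t - Q(v)|\bigr)\right)$.

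The key step is to control the exponent using the reverse triangle inequality, $|t - Q(v')| - |t - Q(v)| \le |Q(v) - Q(v')|$, combined with the sensitivity hypothesis $|Q(v) - Q(v')| \le c$ that holds for neighboring databases. This gives $p_v(t)/p_{v'}(t) \le \exp(\epsilon)$ for every $t$, and integrating this pointwise inequality over an arbitrary event $S$ completes the argument. I do not anticipate a genuine obstacle, as this is a direct density computation; the only point demanding care is the reverse triangle inequality step, which is exactly where the sensitivity bound $c$ enters and pins down the correct noise scale $c/\epsilon$. (By symmetry the same bound holds with $v$ and $v'$ interchanged, so the guarantee is indeed two-sided as required.)
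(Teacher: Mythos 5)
Your proof is correct: the density-ratio computation, the reverse triangle inequality bounding $|t - Q(v')| - |t - Q(v)|$ by the sensitivity $c$, and the integration over an arbitrary event $S$ are exactly the standard argument. The paper itself states this theorem as a citation to \cite{DMNS06} without proof, and your argument is essentially the canonical proof from that work, so there is nothing further to compare.
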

It will be useful to understand how privacy
parameters for individual steps of an algorithm compose into privacy
guarantees for the entire algorithm.  The following useful theorem is a special case of a theorem proved by
Dwork, Rothblum, and Vadhan:
\begin{theorem}[Privacy Composition~\cite{DRV10}]
\label{thm:compose}
Let $0 < \epsilon, \delta < 1$, and let $M_1,\ldots,M_T$ be $(\epsilon',0)$-differentially private algorithms for some $\epsilon' \leq \epsilon / \sqrt{8 T\log\left(\frac{1}{\delta}\right)}.$
Then the algorithm $M$ which on input $v$ outputs $M(v) = (M_1(v), \ldots, M_T(v))$ is $(\epsilon, \delta)$-differentially private.
\end{theorem}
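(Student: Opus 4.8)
The plan is to study the \emph{privacy loss random variable} of the composed mechanism, show that it concentrates tightly around a small mean, and then translate a high-probability bound on this loss into an $(\epsilon,\delta)$-differential privacy guarantee. Fix neighboring databases $v,v'$ with $\|v-v'\|_1\le 1$. Since the mechanisms are run independently, for any output tuple $o=(o_1,\ldots,o_T)$ the likelihood ratio of $M(v)$ to $M(v')$ factorizes across coordinates, so the privacy loss
\[
L(o) \;=\; \ln\frac{\Pr[M(v)=o]}{\Pr[M(v')=o]} \;=\; \sum_{t=1}^{T}\ln\frac{\Pr[M_t(v)=o_t]}{\Pr[M_t(v')=o_t]}
\]
decomposes into a sum $L=\sum_t L_t$. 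When $o$ is drawn from $M(v)$ the summands $L_1,\ldots,L_T$ are independent. I would first record two facts about each $L_t$: because $M_t$ is $(\epsilon',0)$-differentially private, the likelihood ratio lies in $[e^{-\epsilon'},e^{\epsilon'}]$ and hence $L_t\in[-\epsilon',\epsilon']$; and a short computation, using $\mathbb{E}_{o_t\sim M_t(v)}[e^{-L_t}]=1$ together with this range bound, shows the mean is quadratically small, $\mu_t:=\mathbb{E}[L_t]\le \epsilon'(e^{\epsilon'}-1)$.

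Next I would apply a concentration inequality to $L=\sum_t L_t$. Since the $L_t$ are independent and each lies in an interval of width $2\epsilon'$, Hoeffding's inequality gives
\[
\Pr[\,L-\mu>\lambda\,]\;\le\;\exp\!\left(-\frac{\lambda^2}{2T(\epsilon')^2}\right),
\]
where $\mu=\sum_t\mu_t\le T\epsilon'(e^{\epsilon'}-1)$. I would then substitute the hypothesis $\epsilon'\le \epsilon/\sqrt{8T\log(1/\delta)}$ and choose the deviation $\lambda$ so that $\mu+\lambda=\epsilon$. The dominant deviation term $\epsilon'\sqrt{2T\log(1/\delta)}$ equals exactly $\epsilon/2$ for this choice of $\epsilon'$, while the mean $\mu$ is of lower order and is absorbed by the remaining half of the budget; the upshot is $\Pr[L>\epsilon]\le\delta$. (An adaptive version of the theorem would replace Hoeffding by an Azuma--Hoeffding martingale bound, but here independence suffices.)

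Finally I would convert the tail bound into the stated differential privacy guarantee. For an arbitrary event $S\subseteq R$, I would split $\{M(v)\in S\}$ according to whether the privacy loss is at most $\epsilon$ or exceeds it. On the first part the pointwise bound $\Pr[M(v)=o]\le e^{\epsilon}\Pr[M(v')=o]$ holds by definition of the loss, contributing at most $e^{\epsilon}\Pr[M(v')\in S]$; the second part has probability at most $\delta$ by the concentration bound. Summing these yields $\Pr[M(v)\in S]\le e^{\epsilon}\Pr[M(v')\in S]+\delta$, which is precisely $(\epsilon,\delta)$-differential privacy.

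I expect the main obstacle to be making the constants in the concentration step line up, namely controlling the expected-loss term $\mu$ so that it, together with the $\sqrt{\cdot}$ deviation term, fits inside the total budget $\epsilon$; this is exactly what dictates the precise constraint $\epsilon'\le \epsilon/\sqrt{8T\log(1/\delta)}$. The bound $\mu_t\le\epsilon'(e^{\epsilon'}-1)=O((\epsilon')^2)$, rather than the naive $\mu_t\le\epsilon'$, is the crux: the naive estimate would only reproduce the basic composition theorem requiring $\epsilon'\approx\epsilon/T$, whereas the quadratically smaller mean is what delivers the $\sqrt{T}$ savings.
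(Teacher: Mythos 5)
The paper does not actually prove this statement: it is imported, as a special case, from the cited reference \cite{DRV10}, so there is no in-paper proof to compare against. Your proposal reconstructs the standard argument from that reference — decompose the privacy loss of the composed mechanism into a sum of independent per-mechanism losses, note that pure $\epsilon'$-DP confines each loss to $[-\epsilon',\epsilon']$ while its mean is only $O((\epsilon')^2)$ (this is the KL-divergence bound, and your derivation via $\mathbb{E}[e^{-L_t}]=1$ plus the range bound is valid, since $\mathbb{E}[L_t]=\mathbb{E}[L_t+e^{-L_t}-1]\le e^{\epsilon'}-1-\epsilon'\le\epsilon'(e^{\epsilon'}-1)$), apply Hoeffding, and convert the resulting tail bound on the loss into $(\epsilon,\delta)$-DP by splitting any event on whether the loss exceeds $\epsilon$ — and this is correct; you also rightly identify the quadratically small mean as the source of the $\sqrt{T}$ savings. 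The one point worth flagging is the constant bookkeeping in your absorption step: bounding $\mu \le T\epsilon'(e^{\epsilon'}-1)\le \epsilon/2$ from $\epsilon' \le \epsilon/\sqrt{8T\log(1/\delta)}$ requires $\log(1/\delta)$ to be bounded below (e.g. $\delta \le 1/e$ suffices); as $\delta \to 1$ the hypothesis on $\epsilon'$ becomes vacuous and the absorption fails (indeed the statement as literally written, for arbitrary $\delta < 1$, degenerates in that regime). Since the paper only invokes the theorem with $\delta$ negligibly small, this is a caveat about the statement's phrasing rather than a genuine gap in your argument.
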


The local privacy model (alternately, the fully distributed setting) was introduced by Kasiviswanathan et al.
\cite{KLNRS08} in the context of learning. The local privacy model
formalizes  randomized response: there is no central database of
private data. Instead, each individual $i$ maintains possession of their
own data element (i.e. a database $v^i$ of size $||v^i||_1 = 1$), and answers questions about
it only in a differentially private manner. Formally, the database $v
\in \mathbb{N}^{|N|}$ is the sum of $n$ databases of size $1$: $v = \sum_{i=1}^nv^i$, and each $v^i$ is held by individual $i$.
\begin{definition}[\cite{KLNRS08} (Local Randomizer)]
An $(\epsilon,\delta)$-local randomizer $R:\mathbb{N}^{|N|}\rightarrow R$ is an
$(\epsilon,\delta)$-differentially private algorithm that takes a database of
size $||v||_1=1$.
\end{definition}
In the local privacy model, algorithms may interact with the database
only through a local randomizer oracle:
\begin{definition}[\cite{KLNRS08} (LR Oracle)]
An LR oracle $LR_v(\cdot,\cdot)$ takes as input an index $i \in [n]$
and an $(\epsilon,\delta)$-local randomizer $R$ and outputs a random value $w \in
R$ chosen according to the distribution $R(v^i)$, where $v^i$ is
the element held by the $i$'th individual in the database.
\end{definition}

\begin{definition}[\cite{KLNRS08} (Local Algorithm)]
An algorithm is $(\epsilon,\delta)$-local if it accesses the database $v$ via the
oracle $LR_v$, that satisfies the following restriction: if
$LR_v(i,R_1),\ldots,LR_v(i,R_k)$ are the algorithm's invocations of
$LR_v$ on index $i$, then the joint outputs of each of these $k$ algorithms must be $(\epsilon,\delta)$-differentially private.
\end{definition}

To avoid cumbersome notation, we will avoid the formalism of LR oracles, instead remembering that for algorithms in the local model, any operation on $v^i$ must be carried out without access to any $v^j$ for $j \neq i$, and must be differentially private in isolation.

\subsection{Probabilistic Tools}
We will make use of several useful probabilistic tools. First, the well-known Johnson-Lindenstrauss lemma:

\begin{theorem}[Johnson-Lindenstrauss Lemma]
\label{lemma-jl}
Let $0 < \gamma < 1$ be given. For any set $V$ of $q$ vectors in $\RR^N$, there
exists a linear map $A : \RR^N \rightarrow \RR^m$ with $m = O\left(\frac{\log
q}{\gamma^2}\right)$ such that $A$ is approximately an isometric embedding of
$V$ into $\RR^m$. That is, for all $x, y \in V$, we have the two bounds
\[ (1 - \gamma) \|x - y\|^2 \leq \|A(x - y)\|^2 \leq (1 + \gamma) \|x - y\|^2 \]
\[ |\langle Ax, Ay\rangle - \langle x,y\rangle | \leq O(\gamma (\|x\|^2 + \|y\|^2)) \]
In particular, any $m \times N$ random projection matrix $A_p$, whose entries are drawn IID
uniformly from $\{-1/\sqrt{m}, 1/\sqrt{m}\}$, enjoys this property with
probability at least $1-\beta$, with $m = O\left(\frac{\log q \log
(1/\beta)}{\gamma^2}\right)$. Note that this projection matrix does not depend
on the set of vectors $V$.
\end{theorem}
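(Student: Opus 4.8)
The plan is to establish the \emph{distributional} form of the lemma first — that a random matrix $A_p$ preserves the squared norm of any single fixed vector with exponentially small failure probability — and then boost this to the stated statement about all pairs in $V$ by a union bound, finally extracting the inner-product inequality by polarization. The existence claim in the first sentence will then follow because a random $A_p$ succeeds with positive probability.

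First I would fix a unit vector $u \in \RR^N$ and analyze $\|A_p u\|^2 = \sum_{j=1}^m \langle a_j, u\rangle^2$, where $a_1,\ldots,a_m$ are the independent rows of $A_p$. Writing the $k$th coordinate of $a_j$ as $\sigma_{jk}/\sqrt{m}$ with the $\sigma_{jk}$ independent uniform signs, a direct computation using $\EE[\sigma_{jk}\sigma_{jl}] = \delta_{kl}$ gives $\EE[\|A_p u\|^2] = \|u\|^2$. The crux is concentration: I would prove
$$\Pr\left[\left|\,\|A_p u\|^2 - \|u\|^2\,\right| > \gamma\|u\|^2\right] \leq 2\exp\!\left(-c\gamma^2 m\right)$$
for an absolute constant $c > 0$, which by homogeneity reduces to the unit-norm case.

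The main obstacle is exactly this tail bound, because with $\pm$-entries $\|A_p u\|^2$ is \emph{not} a scaled $\chi^2$ variable (as it would be for Gaussian entries), so one cannot simply invoke chi-squared concentration. Instead I would bound the moment generating function of $\|A_p u\|^2$ directly: each $\langle a_j, u\rangle = \tfrac{1}{\sqrt m}\sum_k \sigma_{jk} u_k$ is a weighted Rademacher sum, hence sub-Gaussian with variance proxy $\|u\|^2/m$ by Hoeffding's lemma, so its square is sub-exponential. Bounding $\EE[\exp(t\langle a_j,u\rangle^2)]$ and multiplying over the independent rows, a standard Chernoff/Bernstein optimization over $t$ yields the displayed bound, so that a per-vector failure probability $\beta'$ requires only $m = O(\log(1/\beta')/\gamma^2)$.

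Finally I would apply the distributional bound to the $O(q^2)$ vectors consisting of all pairwise differences $x-y$ together with the vectors of $V$ themselves, setting $\beta' = \Theta(\beta/q^2)$ so that a union bound simultaneously preserves all these squared norms up to a factor $(1\pm\gamma)$ with total probability $1-\beta$, using $m = O(\log q\,\log(1/\beta)/\gamma^2)$; the first displayed inequality of the statement is then immediate. For the inner-product bound I would invoke the polarization identity $\langle x,y\rangle = \tfrac12(\|x\|^2 + \|y\|^2 - \|x-y\|^2)$: since $A_p$ is linear, $\langle A_p x, A_p y\rangle - \langle x,y\rangle$ is a signed combination of the three already-controlled norm errors, so its magnitude is at most $\tfrac{\gamma}{2}(\|x\|^2 + \|y\|^2 + \|x-y\|^2)$, and combining with $\|x-y\|^2 \leq 2(\|x\|^2 + \|y\|^2)$ bounds it by $O(\gamma(\|x\|^2 + \|y\|^2))$, as claimed.
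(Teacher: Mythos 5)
You should first know that the paper never proves this statement: it appears in the Preliminaries as a quoted ``probabilistic tool'' (the classical Johnson--Lindenstrauss lemma in its $\pm 1/\sqrt{m}$ random-sign form), cited without proof and used downstream as a black box in the analysis of the JL-HH mechanism. So your proposal can only be compared against the standard proof from the literature, and it matches that proof and is correct. The step carrying all the difficulty is the distributional bound $\Pr\left[\left|\,\|A_p u\|^2-\|u\|^2\,\right|>\gamma\|u\|^2\right]\leq 2\exp(-c\gamma^2 m)$, and you handle it the right way: with $\pm$ entries one cannot invoke chi-squared concentration, but each row inner product $\langle a_j,u\rangle$ is a Rademacher average, hence sub-Gaussian with variance proxy $\|u\|^2/m$ by Hoeffding's lemma, its square is sub-exponential, and a Chernoff--Bernstein optimization over the $m$ independent rows gives the claim (to be fully rigorous, note that the lower tail requires MGF control for negative arguments as well, i.e.\ a bound on $\EE[\exp(-t\langle a_j,u\rangle^2)]$; this is standard but worth stating). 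The union bound over the $O(q^2)$ pairwise differences together with the $q$ vectors of $V$ themselves, and the polarization step giving $\left|\langle A_p x,A_p y\rangle-\langle x,y\rangle\right|\leq \frac{\gamma}{2}\left(\|x\|^2+\|y\|^2+\|x-y\|^2\right)\leq \frac{3\gamma}{2}\left(\|x\|^2+\|y\|^2\right)$, are both correct, and the existence claim follows by taking $\beta$ constant so that a random $A_p$ succeeds with positive probability. One remark: your accounting actually yields $m=O\left((\log q+\log(1/\beta))/\gamma^2\right)$, which for $\beta\leq 1/2$ is within a constant factor of, and never worse than, the product form $O\left(\log q\,\log(1/\beta)/\gamma^2\right)$ stated in the paper; so you prove a slightly stronger statement than the one quoted, which is harmless since the failure probability is monotone decreasing in $m$.
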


In other words, any set of $q$ points in a high dimensional space can be {\em obliviously} embedded into a space of dimension $O(\log q)$ such that w.h.p. this embedding essentially preserves pairwise distances.

In our analysis, we will also make use of a simple tail bound on the sums of Laplace random variables:

\begin{theorem}[See, e.g.~\cite{GRU12}]
\label{thm-laplace-sum}
Let $X_i, i \in [n]$ be IID random variables drawn from the $\Lap(b)$ (the
Laplace distribution with parameter $b$) and let $X = \sum_{i = 1}^n X_i$.
Then, we have the bound

\begin{displaymath}
  \Pr [ X \geq T ] \leq \left\{
    \begin{array}{lr}
      \exp\left(-\frac{T^2}{6 n b^2}\right) & : T \leq nb \\
      \exp\left(-\frac{T}{6 b}\right) & : T > n b
    \end{array}
   \right.
\end{displaymath}

In particular, choosing $T_\beta = b \sqrt{6n} \log(2/\beta)$ gives
\[ \Pr [ |X| \leq T_\beta ] \geq 1 - \beta \]
\end{theorem}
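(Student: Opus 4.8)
The plan is to prove this via the standard exponential-moment (Chernoff) method. Since the Laplace distribution is sub-exponential — its moment generating function exists only in a bounded interval — we should expect exactly the two-regime behavior stated: a sub-Gaussian tail $\exp(-T^2/(6nb^2))$ for moderate deviations $T \le nb$, crossing over to a sub-exponential tail $\exp(-T/(6b))$ once $T$ exceeds $nb$. The crossover is forced by the finite domain of the MGF, and pinning down that crossover correctly is the one delicate point.

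First I would compute the MGF of a single $X_i \sim \Lap(b)$. A direct integration of $\frac{1}{2b}e^{-|x|/b}$ against $e^{sx}$, splitting at $x=0$, gives
\[ \EE[e^{sX_i}] = \frac{1}{1 - s^2 b^2}, \qquad |s| < 1/b. \]
By independence, the MGF of $X = \sum_{i=1}^n X_i$ is $(1 - s^2 b^2)^{-n}$ on the same interval, and Markov's inequality applied to $e^{sX}$ yields, for every $s \in (0, 1/b)$,
\[ \Pr[X \ge T] \le \exp\!\bigl(-sT - n\log(1 - s^2 b^2)\bigr). \]
The next step is to linearize the logarithm: using $-\log(1-u) \le 2u$ for $u \in [0, 1/2]$ (valid since $u + u^2 \le 2u$ there), the exponent is bounded above by $2n s^2 b^2 - sT$ whenever $s^2 b^2 \le 1/2$.

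With this quadratic-in-$s$ bound I would optimize over $s$ in two cases. For $T \le nb$, the unconstrained minimizer $s^\ast = T/(4nb^2)$ satisfies $s^\ast < 1/b$, and plugging it in gives exponent $-T^2/(8nb^2)$, comfortably within the claimed $-T^2/(6nb^2)$ (the looser constant $6$ absorbs the slack in the log-linearization). For $T > nb$, this minimizer would push $s$ outside the admissible interval, so instead I fix $s$ at a constant fraction of $1/b$ (e.g. $s = 1/(2b)$, keeping $s^2 b^2 = 1/4 \le 1/2$); the exponent becomes $n/2 - T/(2b)$, which for $T$ sufficiently larger than $nb$ is dominated by $-T/(6b)$, again using the constant $6$ for slack. \emph{This case split is the main obstacle}: one must verify that the chosen $s$ stays in $(0,1/b)$ and that the constants reconcile across the two regimes at the boundary $T = nb$.

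Finally, for the ``in particular'' claim I would set $T_\beta = b\sqrt{6n}\,\log(2/\beta)$ and apply the sub-Gaussian branch (in the regime where $T_\beta \le nb$, which holds for the intended parameter range): the exponent becomes $-T_\beta^2/(6nb^2) = -\log^2(2/\beta) \le -\log(2/\beta)$ once $\log(2/\beta) \ge 1$, giving $\Pr[X \ge T_\beta] \le \beta/2$. Since $X$ is a sum of symmetric random variables it is itself symmetric, so $\Pr[|X| > T_\beta] = 2\Pr[X > T_\beta] \le \beta$, which is the desired conclusion $\Pr[|X| \le T_\beta] \ge 1 - \beta$.
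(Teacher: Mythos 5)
The paper offers no proof of this theorem at all (it simply cites \cite{GRU12}), so the comparison is against the standard argument, and your Chernoff/MGF route is indeed the standard one; but as written your proof has a genuine gap, rooted in a reversed comparison of constants. In the regime $T \le nb$ you derive the exponent $-T^2/(8nb^2)$ and declare it ``comfortably within'' the claimed $-T^2/(6nb^2)$, saying the constant $6$ is looser. This is backwards: since $T^2/(8nb^2) < T^2/(6nb^2)$, we have $\exp(-T^2/(8nb^2)) > \exp(-T^2/(6nb^2))$, so your bound is strictly \emph{weaker} than the stated one and does not imply it. To actually obtain the constant $6$ you need a sharper linearization than $-\log(1-u) \le 2u$: for $u \in [0,1/2]$ one also has $-\log(1-u) \le \frac{3}{2}u$ (the function $\frac{3}{2}u + \log(1-u)$ vanishes at $u=0$, increases on $[0,1/3]$, and equals $\frac{3}{4} - \log 2 > 0$ at $u = 1/2$), which bounds the exponent by $\frac{3}{2}ns^2b^2 - sT$; the minimizer $s^* = T/(3nb^2)$ is feasible for $T \le nb$ and yields exactly $-T^2/(6nb^2)$. (Separately, your justification of $-\log(1-u)\le 2u$ via ``$u + u^2 \le 2u$'' ignores the higher-order terms of the series; use $-\log(1-u) \le u/(1-u) \le 2u$ instead.)

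The second gap is at the crossover, which you yourself flag as the delicate point but do not resolve---and with your specific choices it genuinely fails. Taking $s = 1/(2b)$ gives exponent $n/2 - T/(2b)$, and $n/2 - T/(2b) \le -T/(6b)$ forces $T \ge \frac{3}{2}nb$, so the stated bound $\exp(-T/(6b))$ is not established on the range $nb < T < \frac{3}{2}nb$. The repair again uses the $\frac{3}{2}$-linearization: the minimizer $s^* = T/(3nb^2)$ stays feasible (i.e.\ $(s^*)^2b^2 \le 1/2$) for all $T \le 3nb/\sqrt{2}$, so on $nb < T \le 3nb/\sqrt{2}$ one still gets $\exp(-T^2/(6nb^2))$, which is at most $\exp(-T/(6b))$ precisely because $T \ge nb$; for $T > 3nb/\sqrt{2}$, the boundary choice $s = 1/(\sqrt{2}b)$ gives exponent $\frac{3}{4}n - T/(\sqrt{2}b) \le -T/(6b)$, since $\bigl(\frac{1}{\sqrt{2}} - \frac{1}{6}\bigr)T/b \ge \frac{3}{4}n$ holds whenever $T \ge \frac{3}{2}nb$. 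Your final step (symmetry of $X$ plus the sub-Gaussian branch at $T_\beta$) is fine, modulo the caveat $T_\beta \le nb$, which the theorem statement itself also glosses over.
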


\section{Information Theoretic Upper and Lower Bounds.}
In this section we present upper and lower bounds on the accuracy to which any algorithm in the fully distributed model can privately approximate heavy hitters. Our upper bound can be viewed as an algorithm, albeit one that runs in time linear in $|N|$ and so is not what we consider to be efficient.

\subsection{An Upper Bound via Johnson-Lindenstrauss Projections}
\label{sec-jl}

We present here our first algorithm, referred to as {\em JL-HH}, that solves the heavy hitters problem in the local model using the Johnson-Lindenstrauss lemma. The algorithm JL-HH is outlined in \Cref{alg-JL-HH}. We write
$e_i$ to refer to the $i$'th standard basis vector in $\RR^N$, and write
$\mbox{RandomProjection}(m,N+1)$ for a subroutine which returns a linear embedding of $N+1$ points
into $m$ dimensions using a random $\pm 1/\sqrt{m}$ valued projection matrix, as specified by the Johnson-Lindenstrauss lemma. By the Johnson-Lindenstrauss lemma, for any set of $N+1$ elements, this map
approximately preserves pairwise distances with high probability.

\begin{algorithm}
\caption{JL-HH Mechanism}
\label{alg-JL-HH}
\begin{algorithmic}
\REQUIRE Private histograms $v^{i} \in \NN^N, i \in [n]$. Privacy parameters
$\epsilon, \delta > 0$. Failure probability $\beta > 0$.
\ENSURE $p^*$, index of the heavy hitter.

\STATE $\gamma \gets 1/n^2$
\STATE $m \gets \frac{\log (N+1) \log(2/\beta)}{\gamma^2}$
\STATE $A \gets \mbox{RandomProjection}(m, N+1)$

\FOR{$p = 1$ \mbox{to} $N$ indices}
\FOR{$i = 1$ \mbox{to} $n$ users}
\STATE $z^{i} \sim
\left\{\Lap\left(\frac{\sqrt{8\log(1/\delta)}}{\epsilon}\right)\right\}^m$
\STATE $q^{i} = A v^{i} + z^{i}$
\STATE $r_{ip} = \langle Ae_p, q^{i} \rangle$
\ENDFOR

\STATE $c_p \gets \sum_{i = 1}^n r_{ip}$
\ENDFOR

\STATE $p^* \gets \argmax_p c_p$
\RETURN $p^*$
\end{algorithmic}
\end{algorithm}

JL-HH is based on the following straightforward idea. If $v$ is a private histogram, we will estimate the
count of the $i$'th element ($\langle v, e_i \rangle$), by estimating $\langle
Av, Ae_i \rangle$, and returning the largest count. By \Cref{lemma-jl}, since we
are using the random projections matrix, we have that with high probability,
inner products between points in the set $V = \{e_1 \cdots e_N, v\}$ are
approximately preserved under $A$. However, we cannot access $Av$ directly since $v$
is private data. To preserve differential privacy, our mechanism must add noise
$z$ to $Av$, and work only with the noisy samples.  Our analysis will thus focus on
bounding the error introduced by this noise term.  First, though, we show that
JL-HH is differentially private.

\begin{lemma}
JL-HH operates in the local privacy model and is $(\epsilon,\delta)$-differentially private.
\end{lemma}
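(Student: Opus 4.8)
The plan is to verify two things separately: first, that the mechanism fits the local privacy framework, and second, that the overall privacy loss aggregates to $(\epsilon, \delta)$. For the local structure, I would observe that the only place the mechanism touches the private data is in the computation $q^i = Av^i + z^i$, which depends solely on individual $i$'s histogram $v^i$ together with fresh independent Laplace noise $z^i$. The subsequent quantities $r_{ip} = \langle Ae_p, q^i\rangle$ and $c_p = \sum_i r_{ip}$ are postprocessing of these noisy messages, and the projection matrix $A$ is data-independent. Hence each individual is accessed only through a local randomizer $R$ that maps $v^i$ to the vector $q^i \in \RR^m$, with no access to any $v^j$ for $j \neq i$, exactly as required by the local model.

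Next I would bound the privacy cost of the single local randomizer $R(v^i) = Av^i + z^i$. The key point is that $R$ releases an $m$-dimensional vector, where each coordinate is the true value $(Av^i)_k$ perturbed by independent $\Lap(b)$ noise with $b = \sqrt{8\log(1/\delta)}/\epsilon$. To apply \Cref{thm:laplace-privacy} coordinatewise, I would bound the per-coordinate $\ell_1$-sensitivity of the map $v^i \mapsto Av^i$: since $A$ has entries in $\{-1/\sqrt m, 1/\sqrt m\}$ and neighboring size-$1$ databases $v^i, (v^i)'$ differ in $\ell_1$ by at most $1$ (in fact they are distinct standard basis vectors, so the difference has $\ell_1$-norm $2$, but each is a single basis vector so only two coordinates change), each coordinate of $Av^i$ changes by at most $1/\sqrt m$, which is at most $1$. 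Thus each of the $m$ coordinates, released via $\Lap(1/\epsilon')$-type noise, is $\epsilon'$-differentially private for an appropriate $\epsilon'$, and I would then compose the $m$ coordinates.

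The main obstacle, and the step requiring care, is getting the composition arithmetic to line up with the stated noise scale. I would invoke \Cref{thm:compose} with $T = m$ invocations: to achieve overall $(\epsilon,\delta)$-privacy for the $m$-coordinate release, each coordinate should be $\epsilon'$-private with $\epsilon' \leq \epsilon/\sqrt{8m\log(1/\delta)}$. By \Cref{thm:laplace-privacy}, releasing a coordinate of sensitivity $\leq 1$ with $\Lap(b)$ noise is $(1/b)$-differentially private, so I need $1/b \leq \epsilon' $, i.e. $b \geq \sqrt{8m\log(1/\delta)}/\epsilon$. Here I expect a discrepancy to surface: the algorithm sets $b = \sqrt{8\log(1/\delta)}/\epsilon$, which omits the factor of $\sqrt{m}$, so I would flag that the noise scale as written appears to give $(\epsilon\sqrt m, \delta)$-privacy rather than $(\epsilon,\delta)$ unless the sensitivity is taken to be $1/\sqrt m$ \emph{and} the composition factor is absorbed differently, or unless the intended noise parameter silently includes the $\sqrt m$. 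The resolution is to carry out the composition cleanly, treating the true per-coordinate sensitivity $1/\sqrt m$, and then either confirm the constants or note the needed correction. Once the single randomizer is shown $(\epsilon,\delta)$-private, the local model definition guarantees that the full mechanism, which invokes the randomizer on each individual exactly once (the values $q^i$ are computed once per user and reused across all $p$), is $(\epsilon,\delta)$-differentially private, completing the argument.
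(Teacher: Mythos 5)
Your overall route is the same as the paper's: isolate the single local randomizer $v^i \mapsto Av^i + z^i$, note that everything downstream ($r_{ip}$, $c_p$, the argmax) is postprocessing of the noisy messages, bound the per-coordinate sensitivity, apply \Cref{thm:laplace-privacy} coordinatewise, and combine via \Cref{thm:compose} with $T = m$. The local-model half of your argument is correct and matches the paper.

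However, there is a genuine gap at the final step: the ``discrepancy'' you flag does not exist, and you create it by contradicting your own sensitivity calculation. Earlier you correctly observe that each coordinate of $Av^i$ changes by at most $1/\sqrt{m}$ (since the entries of $A$ are $\pm 1/\sqrt{m}$), but when you invoke \Cref{thm:laplace-privacy} you substitute ``sensitivity $\leq 1$,'' and that is what manufactures the spurious missing factor of $\sqrt{m}$. Carrying the correct sensitivity through closes the proof: a query of sensitivity $c = 1/\sqrt{m}$ released with $\Lap(b)$ noise, $b = \sqrt{8\log(1/\delta)}/\epsilon$, is $(c/b)$-differentially private by \Cref{thm:laplace-privacy}, i.e.\ $\epsilon_0$-private with
\[ \epsilon_0 \;=\; \frac{1/\sqrt{m}}{\sqrt{8\log(1/\delta)}/\epsilon} \;=\; \frac{\epsilon}{\sqrt{8m\log(1/\delta)}}, \]
which is exactly the threshold $\epsilon/\sqrt{8T\log(1/\delta)}$ that \Cref{thm:compose} requires of each of the $T = m$ queries to conclude $(\epsilon,\delta)$-privacy. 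So the noise scale in the algorithm is right as written: the factor of $\sqrt{m}$ you were hunting for is supplied by the small per-coordinate sensitivity, not by the noise parameter, and this cancellation is precisely the paper's argument. Because your writeup instead ends by suspecting the mechanism only achieves $(\epsilon\sqrt{m},\delta)$-privacy ``unless'' the constants work out, and never performs this one-line verification, it does not actually establish the lemma. (Your parenthetical worry that two neighboring size-$1$ databases differ by $\ell_1$-norm $2$ rather than $1$ is a separate, minor point: the paper's privacy definition takes $\|v-v'\|_1 \leq 1$ as the neighboring relation, and under the swap convention it would cost at most a factor of $2$ in $\epsilon$; it is not the source of the $\sqrt{m}$ confusion.)
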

\begin{proof}
The measurement $Av$ is computed in the fully distributed setting, by computing $Av \approx \sum_{i=1}^nAv^i + z^i$. Each individual $i$ may compute $Av^i + z^i$ which corresponds to answering a sequence of $m$ linear queries, each
with sensitivity $1/\sqrt{m}$. By \Cref{thm:laplace-privacy}, the noise that JL-HH
adds guarantees that each such query is $\epsilon_0$-differentially private,
with
\[\epsilon_0 = \frac{\epsilon}{\sqrt{8m \log(1/\delta)}} \]
Thus, by \Cref{thm:compose}, this composition is $(\epsilon,
\delta)$-differentially private, as desired. From here, the algorithm works with
the noised measurement instead of private data, and is therefore differentially
private.
\end{proof}

Now, we show that JL-HH estimates the counts to within an additive error of
$O\left(\frac{\sqrt{n \log N}}{\epsilon}\right)$.

\begin{theorem}
For any $\beta > 0$, JL-HH mechanism is $(\alpha, \beta)$-accurate for the heavy
hitters problem, with $\alpha = O\left(\frac{\sqrt{n\log
(N/\beta)\log(1/\delta)}}{\epsilon}\right)$.
\end{theorem}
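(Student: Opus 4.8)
The plan is to analyze the aggregate score $c_p$ computed for each candidate index $p$ and show that it tracks the true count $v_p$ up to a small, uniformly controlled error. First I would expand $c_p$ using linearity of $A$ and of the inner product, together with the identity $v=\sum_{i=1}^n v^i$: since $q^i=Av^i+z^i$,
\[ c_p = \sum_{i=1}^n \langle Ae_p, Av^i+z^i\rangle = \langle Ae_p, Av\rangle + \sum_{i=1}^n \langle Ae_p, z^i\rangle. \]
I would then condition on the event (of probability at least $1-\beta$ by the choice of $m$ in \Cref{lemma-jl}) that $A$ is a good Johnson--Lindenstrauss embedding of the $N+1$ points $\{e_1,\dots,e_N,v\}$, so that the ``signal'' term $\langle Ae_p,Av\rangle$ and the ``noise'' term can be handled separately.

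For the signal term, the inner-product guarantee of \Cref{lemma-jl} gives $|\langle Ae_p, Av\rangle - \langle e_p, v\rangle| \le O(\gamma(\|e_p\|^2 + \|v\|^2))$ simultaneously for every $p$. Since $\langle e_p, v\rangle = v_p$, $\|e_p\|^2 = 1$, and $\|v\|_2^2 \le \|v\|_1^2 = n^2$ (the database has $n$ records), the choice $\gamma = 1/n^2$ makes this distortion $O(1)$ for all $p$ at once. Thus $c_p = v_p \pm O(1) + \xi_p$, where $\xi_p := \sum_{i=1}^n \langle Ae_p, z^i\rangle$.

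The heart of the argument is bounding $\max_p|\xi_p|$. Conditioned on $A$, each entry $(Ae_p)_j$ is a fixed $\pm 1/\sqrt m$, and each $z^i_j$ is an independent $\Lap(b)$ variable with $b = \sqrt{8\log(1/\delta)}/\epsilon$; by symmetry of the Laplace law, $\xi_p$ is distributed as a sum of $nm$ i.i.d.\ $\Lap(b/\sqrt m)$ variables. The crucial observation is that the $1/\sqrt m$ scaling exactly cancels the growth in the number of terms: the variance parameter in \Cref{thm-laplace-sum} becomes $(nm)(b/\sqrt m)^2 = nb^2$, independent of $m$, so $\Pr[|\xi_p|\ge T] \le 2\exp(-T^2/(6nb^2))$ (the Gaussian regime applies since $m$ is polynomially large, hence $T \le nb\sqrt m$). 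Taking $T = b\sqrt{6n\log(2N/\beta)}$ and union bounding over the $N$ indices controls $\max_p|\xi_p|$ by $O\big(\tfrac{\sqrt{n\log(N/\beta)\log(1/\delta)}}{\epsilon}\big)$ with probability $1-\beta$. I expect this to be the main obstacle: one must invoke the subgaussian tail of \Cref{thm-laplace-sum} \emph{directly} rather than its loose closed-form quantile $T_\beta=b\sqrt{6n}\log(2/\beta)$, since the latter would produce a spurious $\log(N/\beta)$ factor after the union bound instead of the claimed $\sqrt{\log(N/\beta)}$; likewise a naive Cauchy--Schwarz bound $|\xi_p|\le \|Ae_p\|\,\|\sum_i z^i\|$ would lose a factor of $\sqrt m$ and must be avoided.

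Finally I would combine the two estimates. Writing $\alpha := O(1) + 2\max_p|\xi_p| = O\big(\tfrac{\sqrt{n\log(N/\beta)\log(1/\delta)}}{\epsilon}\big)$ and letting $hh$ denote the true heavy hitter, the optimality $c_{p^*}\ge c_{hh}$ of $p^*=\argmax_p c_p$ gives $v_{p^*}+O(1)+\xi_{p^*} \ge v_{hh}-O(1)+\xi_{hh}$, whence $v_{p^*} \ge fhh(v) - \alpha$ on the intersection of the two good events. Rescaling $\beta$ by a constant factor to absorb the union of the two failure probabilities then completes the proof.
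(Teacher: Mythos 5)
Your proposal is correct and follows essentially the same route as the paper's proof: the same signal/noise decomposition via the Johnson--Lindenstrauss inner-product guarantee with $\gamma = 1/n^2$, the same observation that $\langle Ae_p, z\rangle$ is distributed as a sum of $nm$ i.i.d.\ $\Lap(b/\sqrt{m})$ variables, and the same tail bound from \Cref{thm-laplace-sum} combined with a union bound over the $N$ indices. Your remark that one must invoke the sub-Gaussian (exponential) form of \Cref{thm-laplace-sum} directly, rather than its closed-form quantile $T_\beta$, to obtain $\sqrt{\log(N/\beta)}$ instead of $\log(N/\beta)$ is a detail the paper glosses over, but it is implicit in its application of the same bound.
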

\begin{proof}
Let $v$ be the private histogram, and let $z = \sum_{i=1}^n z^i$ denote the sum of the noise vectors added to each individual's data $v^i$. The error of the mechanism is at most
$$2\max_{i\in [N]}\left| \langle e_i, v\rangle - \langle Ae_i, Av + z\rangle\right|$$
Note that for all $j$, the random variable $z_j$ is distributed as the sum of $n$ i.i.d. Laplace random variables each with scale $b = \sqrt{8\log1/\delta}/\epsilon$. To calculate the
error for an index $i$, we may write:
\begin{align}
|\langle e_i, v \rangle - \langle Ae_i, Av + z\rangle| &\leq |\langle e_i, v
\rangle - \langle Ae_i, Av \rangle| + |\langle Ae_i, z \rangle| \\
&= O(\gamma \|v\|^2 + | \langle Ae_i, z \rangle |) \label{jl-approx-bd}
\end{align}
with the second equality following from \Cref{lemma-jl}.
Recall that we have set $\gamma = n^{-2}$, and let $A$ be the random projection matrix, with $m = O(
\log N \log(2/\beta)/ \gamma^2)$.  With probability at least $1 - \beta/2$, the
random projections matrix $A$ actually satisfies the property for the
Johnson-Lindenstrauss lemma. So, we have
\[ \langle Ae_i, z \rangle = \sum_{j=1}^m (Ae_i)_j \sum_{i=1}^n
z^i_j \]
But $Ae_i$ is a vector of length $m$ with entries drawn uniformly from $\pm
1/\sqrt{m}$.  Since the Laplace distribution is also symmetric, the distribution
of this sum is identical to the distribution of a sum of $mn$ i.i.d. Laplace random
variables each with scale $b = \frac{\sqrt{8\log1/\delta}}{\sqrt{m}\epsilon}$. By our tail bound in \Cref{thm-laplace-sum}, with
probability at least $1 - \beta / 2N$, this sum is bounded by
$O\left(\frac{\sqrt{n\log(1/\delta)\log (N/\beta)}}{\epsilon}\right)$.

On the other hand, the other error $|\langle e_i, v \rangle - \langle Ae_i, Av
\rangle|$ can be bounded by \Cref{jl-approx-bd}, and hence is $O(1)$
by our choice of $\gamma$. Thus, with probability at least $1 - \beta / 2N$, we
have that the estimated count for index $i$ is within an additive factor of
$O\left(\frac{\sqrt{n\log (1/\delta) \log (N/\beta)}}{\epsilon}\right)$ to the
true count of index $i$.  Taking a union bound over all indices, we have that
with probability at least $1 - \beta/2$, this accuracy holds for the heavy
hitter, and all other elements.  Since the probability of failing when picking
$A$ was at most $\beta / 2$, this gives the desired high probability bound.
\end{proof}

It is worthwhile to compare JL-HH with a more naive approaches. A simpler
differentially private algorithm to solve the distributed heavy hitters problem
is to have each user simply add noise $\Lap(1/\epsilon)$ to each entry in the
user's private histogram, and report this vector to the central party, which
sums the noisy vectors and estimates the most frequently occurring item.  This
is differentially private, as any neighboring histogram will change exactly one
entry in a user's histogram. However, this method requires having each user
transmit $O(N)$ amount of information to the central party. JL-HH achieves
similar accuracy compared to this naive approach, but since the clients compress
the histogram first, only $O(\log N)$ information must be communicated. Even
though JL-HH runs in time linear in $N$, there are natural situations where long
running time can be tolerated, but large communication complexity cannot, for
instance, if the central party is a server farm with considerable computational
resources, but the communication with users must happen over standard network
links.

\subsection{A Lower Bound via Anti-Concentration}
\label{sec-lowerbound}

Here we show that our upper bound in the previous subsection is essentially optimal: for any $\epsilon < 1/2$ and any $\delta$ bounded away from $1$ by a constant, no $(\epsilon,\delta)$-private mechanism in the fully distributed setting can be $\alpha$-accurate for the heavy hitters problem for some $\alpha = \Omega(\sqrt{n})$, even in the case in which $|N| = 2$. Our theorem follows by arguing that even after conditioning on the output of the differentially private interaction with each individual in the local model, there is still quite a bit of uncertainty in the distribution over heavy hitters, if the universe elements were initially distributed uniformly at random. We take advantage of this uncertainty to apply an anti-concentration argument, which implies that no matter what answer the algorithm predicts, there is enough randomness leftover in the database instance that the algorithm is likely to be incorrect (with at least some constant probability $\beta$). We remark that our technique (while specific to the local privacy model) holds for $(\epsilon,\delta)$-differential privacy, even when $\delta > 0$. This is similar to lower bounds based on reconstruction arguments~\cite{DN03}, and in contrast to other techniques for proving lower bounds in the centralized model, such as the elegant packing arguments used in~\cite{BKN10,HT10}, which are specific to $(\epsilon,0)$-differential privacy. We use an independence argument also used by~\cite{MMPRTV10} to prove a lower bound in the two-party setting.

\begin{theorem}
\label{thm-lower-bound}
For any $\epsilon \leq 1/2$ and $\delta < 1$ bounded away from $1$, there exists an $\alpha = \Omega(\sqrt{n})$ and a $\beta = \Omega(1)$ such that no $(\epsilon,\delta)$-private mechanism in the local model is $(\alpha, \beta)$-accurate for the heavy hitters problem.
\end{theorem}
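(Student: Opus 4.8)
The plan is to prove the bound in the hardest possible case, $|N| = 2$, by drawing each individual's datum uniformly and independently from $\{1,2\}$ and showing that no $(\epsilon,\delta)$-local mechanism can reliably decide on which side of $n/2$ the count of element $1$ lies. Writing $v = (v_1,v_2)$ with $v_1 + v_2 = n$, let $D = v_1 - v_2 = \sum_{i=1}^n X_i$, where $X_i = +1$ when $v^i = 1$ and $X_i = -1$ otherwise. If the mechanism outputs element $1$ it is inaccurate exactly when $D < -\alpha$, and if it outputs element $2$ it is inaccurate exactly when $D > \alpha$. So it suffices to produce an input distribution under which, for \emph{either} possible output, the ``wrong tail'' of $D$ carries constant probability mass even after conditioning on everything the mechanism sees, with $\alpha = \Omega(\sqrt n)$.

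First I would record the structural fact that drives all local-model lower bounds: if the $v^i$ are a priori independent, then conditioned on the entire transcript $w$ (the concatenation of all local-randomizer outputs), the $v^i$ remain \emph{independent}, with posterior marginals $p_i(w) = \Pr[v^i = 1 \mid w]$. This is the independence argument of \cite{MMPRTV10,BNO08}, valid because every message concerning individual $i$ is produced by a randomizer seeing only $v^i$. I would then use $(\epsilon,\delta)$-privacy to control these posteriors. The key point is \emph{not} that each $p_i$ is bounded away from $0$ and $1$ pointwise (this fails once $\delta > 0$), but that a single randomizer's two output distributions, whose densities I write $\mu_1, \mu_2$, cannot be nearly perfectly distinguishable: taking $S$ to witness their total variation distance and applying the $(\epsilon,\delta)$ inequality yields $\mathrm{TV}(\mu_1,\mu_2) \le 1 - (1-\delta)/e^{\epsilon}$, which is bounded away from $1$ precisely because $\delta$ is bounded away from $1$ and $\epsilon \le 1/2$. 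Combining this with the elementary pointwise bound $\tfrac{\mu_1\mu_2}{\mu_1+\mu_2} \ge \tfrac12\min(\mu_1,\mu_2)$ gives $\EE_w[p_i(1-p_i)] \ge \tfrac14(1-\mathrm{TV}) \ge c'$ for an absolute constant $c' > 0$.

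Next I would analyze $D \mid w$ as a sum of independent $\pm 1$ variables with conditional mean $\mu_w = \sum_i(2p_i(w)-1)$ and conditional variance $\sigma_w^2 = 4\sum_i p_i(w)(1-p_i(w))$. Since $\EE[\mu_w] = 0$ and the summands are bounded, Chebyshev gives $|\mu_w| = O(\sqrt n)$ with probability at least $3/4$; since $\sigma_w^2$ is a sum of independent $[0,1]$ terms of mean $\ge 4c'n$, Hoeffding gives $\sigma_w^2 = \Theta(n)$ with probability $1 - o(1)$. On the intersection of these two events I would apply the Berry--Esseen theorem (summand variances bounded below and third moments bounded above, so the CDF approximation error is $O(1/\sqrt n)$) to conclude that $D \mid w$ is within $O(1/\sqrt n)$ of $\mathcal N(\mu_w,\sigma_w^2)$. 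Because $|\mu_w| = O(\sigma_w)$, Gaussian anti-concentration shows that for $\alpha = c_2\sqrt n$ \emph{both} tails $\Pr[D < -\alpha \mid w]$ and $\Pr[D > \alpha \mid w]$ exceed an absolute constant $\beta_0 > 0$. The output is a deterministic function of $w$, so whichever element the mechanism names, its conditional error probability is at least $\beta_0 - O(1/\sqrt n)$; averaging over the good event gives unconditional error $\Omega(1)$ under the uniform input distribution, and since worst-case error dominates average error, some fixed database forces error $\Omega(1)$. This contradicts $(\alpha,\beta)$-accuracy for a sufficiently small constant $\beta$, yielding the claim with $\alpha = \Omega(\sqrt n)$ and $\beta = \Omega(1)$.

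The step I expect to be the main obstacle is handling $\delta > 0$ honestly. With $\delta$ only bounded away from $1$, a randomizer may leak an individual's bit outright on a constant fraction of transcripts, so one cannot argue coordinate-by-coordinate as in the pure case; the argument must instead rest on the aggregate guarantee that a constant fraction of the posterior variance survives, i.e.\ the bound $\EE_w[p_i(1-p_i)] = \Omega(1)$, which is exactly what keeps $\sigma_w^2 = \Theta(n)$. A secondary subtlety is justifying independence-after-conditioning for \emph{adaptively} chosen randomizers, where the randomizer queried on individual $i$ may depend on earlier outputs: here one argues by induction over the messages that conditioning preserves the product form of the posterior, so that the terms $(2p_i(w)-1)$ remain independent across $i$ and the Chebyshev and Hoeffding concentration of $\mu_w$ and $\sigma_w^2$ continues to apply.
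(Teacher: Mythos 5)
Your proposal is correct, and its skeleton coincides with the paper's: the same hard instance ($|N|=2$ with uniform i.i.d.\ bits), the same independence-after-conditioning-on-the-transcript fact, and the same Berry--Esseen-plus-Gaussian-anti-concentration core. You diverge in two places, and both divergences land on the two weakest points of the paper's own argument. First, the handling of $\delta>0$: the paper asserts that $(\epsilon,\delta)$-privacy gives, for each individual, a pointwise likelihood-ratio bound $\Pr[m_i=\hat m_i\mid s_i=\hat s_i]/\Pr[m_i=\hat m_i\mid s_i=b]\le e^{\epsilon}$ with probability $1-\delta$ over $\hat m_i$, and then Chernoff-bounds the number of ``good'' individuals; this pointwise conversion from approximate DP is not literally correct with those parameters (it requires a known but nontrivial translation lemma). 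Your route --- bounding $\mathrm{TV}(\mu_1,\mu_2)\le 1-(1-\delta)e^{-\epsilon}$, deducing $\EE_w[p_i(1-p_i)]\ge\tfrac14(1-\mathrm{TV})$, and then getting $\sigma_w^2=\Theta(n)$ by concentration --- sidesteps this entirely and is cleaner. Second, the endgame: the paper proves the $\Omega(\sqrt n)$ bound for estimating the \emph{frequency} of the heavy hitter and then reduces identification to frequency estimation via an informally sketched binary search over databases padded with dummy individuals; you instead argue directly about identification, showing that on a constant-probability good event \emph{both} tails $\Pr[D<-\alpha\mid w]$ and $\Pr[D>\alpha\mid w]$ are $\Omega(1)$, so whichever of the two elements the mechanism names, it errs with constant probability --- no reduction needed. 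Two small imprecisions in your write-up, neither fatal: the parenthetical ``summand variances bounded below'' is not quite right, since a leaked individual can have $\sigma_i=0$; use the sum-form Berry--Esseen error $C\sum_i\beta_i/(\sum_i\sigma_i^2)^{3/2}=O(1/\sqrt n)$, which only needs the total conditional variance to be $\Theta(n)$. And in the adaptive case the terms $2p_i(w)-1$ are martingale differences rather than independent random variables, so Chebyshev and Hoeffding should be replaced by orthogonality of martingale increments and Azuma's inequality; the conclusions are unchanged.
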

\begin{proof}
We give a lower bound instance in which the universe is $N = \{0,1\}$. Each individual $i$ is assigned a universe element $s_i \in \{0,1\}$ uniformly at random.  Let $A_i:N\rightarrow \cM$ denote the $(\epsilon,\delta)$-differentially private algorithm which acts on the data $s_i$ of individual $i$, and write $m_i = A_i(s_i)$.

We condition on the order of the parties that we query and on the output
of each algorithm, $m_i = \hat{m_i}$ for fixed $\hat{m_i} \in \cM$.

We first observe that conditioning on the outputs of each $A_i$: $m_i = \hat{m}_i$ for each $i$, the random variables $s_i$ remain independent of one another. (This is a standard fact from communication complexity)

We next argue that under this conditioning, the marginal distributions of a constant fraction of the $s_i$ variables remain
approximately uniform. If we define the random variables $X_i$ to be the
indicator of the event $s_i = \hat{s_i}$ (conditioning on all the messages), we
can apply Bayes' rule to get for all $i \in [n]$:
\begin{align*}
\Pr[X_i = \hat{s_i}] &= \Pr[s_i = \hat{s_i} | m_i = \hat{m_i}] \\
&= \frac{\Pr[m_i = \hat{m_i} | s_i = \hat{s_i}] \Pr[s_i = \hat{s_i}]}{\Pr[m_i =
\hat{m_i}]} \\
&\leq \frac{\Pr[m_i = \hat{m_i} | s_i = \hat{s_i}] \Pr[s_i = \hat{s_i}]}{\Pr[m_i =
\hat{m_i} | s_i = b]}
\end{align*}
where $b$ is some element of the universe. Because each $A_i$ is $(\epsilon,\delta)$-differentially private, we have that with probability at least $1-\delta$, the following random variable (where the randomness is over the choice of $\hat{m}_i$) is bounded:
\[ \frac{\Pr[m_i = \hat{m_i} | s_i = \hat{s_i}]}{\Pr[m_i = \hat{m_i} | s_i =
b]} \leq e^\epsilon\]
and thus with probability $1-\delta$ over the choice of $\hat{m}_j$: $\Pr[X_i = \hat{s_i}] \leq (e^\epsilon)  / 2$, using the prior on $s_i$.

In similar fashion, we can prove a lower bound on the probability. So, we have that for each $i$ independently with probability at least $1-\delta$:
$\Pr[X_i = \hat{s_i}] \in [(e^{-\epsilon})/2, (e^\epsilon) / 2]$. Because we assume $\epsilon \leq 1/2$, we therefore have for each $i$ independently with probability $1-\delta$: $\Pr[X_i = \hat{s_i}] \in [c_1,c_2]$ where $c_1,c_2$ are constants bounded away from $0$ and $1$ respectively. Because this occurs with constant $1-\delta$ probability for each $i$, for any constant $\beta$, we can (by the Chernoff bound) take $n$ to be sufficiently large so that except with probability $\beta/2$, we have $\Pr[X_i = \hat{s_i}] \in [c_1,c_2]$ for $\Omega(n)$ individuals $i$. This, together with the conditional independence of the $X_i$'s, allows us to apply the Berry-Esseen theorem:

\begin{theorem}[Berry-Esseen]
Given independent random variables $X_i, i \in [n]$, let $\mu_i = \EE[X_i],
\sigma_i^2 = \EE[(X_i - \mu_i)^2], \beta_i = \EE[|X_i - \mu_i|^3]$, and let
\[S_n = \frac{ \sum_{i=1}^n (X_i - \mu_i) }{\sqrt{\sum_{i = 1}^n \sigma_i^2}} \]
If $F_n$ is the cdf of $S_n$, and $\Phi$ is the cdf for the standard normal
distribution, then there exists a constant $C$ such that
\[ \sup_x |F_n(x) - \Phi(x)| \leq C \psi \]
where
\[ \psi = \left(\sum_{i = 1}^n \sigma_i^2\right)^{-1/2} \max
\frac{\beta_i}{\sigma_i^2} \]
\end{theorem}

For each of the $\Omega(n)$ individuals $i$ for which $\Pr[X_i = 1] \in [c_1,c_2]$, each $\sigma_i^2$ and $\beta_i$ is a constant bounded away from $0$. Thus, we have with probability at least $\beta/2$: $\psi \leq O(1/\sqrt{n})$, and hence the cdf $F_n$
of the sample mean $S_n$ converges uniformly to the normal distribution. By a
change of variables, this means that the cdf of the sum $\sum_{i=1}^n (X_i -
\mu_i)$ converges to the cdf of a normal distribution with mean $0$ and variance
$\sigma^2 = \sum_{i = 1}^n \sigma_i^2 = \Omega(n)$. The next lemma lower
bounds the probability that $S_n$ is within an additive factor of $\Omega(\sqrt{n})$ of its mean.

\begin{lemma}
Let $\beta > 0$ be given and condition on the event that $\Pr[X_i = 1] \in [c_1,c_2]$ for $\Omega(n)$ individuals $i \in [n]$.  For sufficiently large $n$, there exists a constant $C$ such that
\[ \Pr\left[ \left|\sum_{i = 1}^n (X_i - \mu_i)\right| \geq C \sqrt{n}\right]
\geq 1 - \beta/2 \]
\end{lemma}
\begin{proof}[Proof of Lemma]
This is immediate, since by the Berry-Esseen theorem the sum $\sum_{i = 1}^n (X_i - \mu_i)$ converges uniformly to a Gaussian distribution with standard deviation $\sigma = \Omega(\sqrt{n})$.
%
%
%
%
\end{proof}

To complete the proof, we note that the distribution of $n_1 \equiv \sum_{i=1}^n X_i$ is simply the
distribution of the number of occurrences of universe element $1$, after conditioning on the outcome of differentially private mechanisms $A_1,\ldots,A_n$. Consider a mechanism, which given the outcome of mechanisms $A_1,\ldots,A_n$ attempts to guess the value of $n_1$, and outputs $\hat{n}_1$. Let $\mu = \sum_{i=1}^n \mu_i$. By the properties of the Gaussian distribution we have:
$$\Pr[|n_1 - \hat{n_1}| \leq t] \leq \Pr[|n_1 - \mu|  \leq t]$$
for all values of $t$. In particular, for some $t = C \sqrt{n}$ we have shown that this probability is at most $\beta$. In other words, we have shown that for some constant $\beta \geq 0$ and for some $\alpha = \Omega(\sqrt{n})$, there is no $(\epsilon,\delta)$-private algorithm in the local model which is able to estimate the \emph{frequency} of the heavy hitter to within an additive $\alpha$ factor with probability $1-\beta$. It is straightforward to see that there therefore cannot be an $(\alpha,\beta)$-accurate, $(\epsilon,\delta)$-private mechanism for the heavy hitters problem: any such mechanism could be converted to a mechanism which estimates the frequency of the heavy hitter by introducing ``dummy'' individuals corresponding to the universe element which is not the heavy hitter, and performing a binary search over their count by computing the identity of the heavy hitter in each dummy instance. The count at which the identity of the heavy hitter in the dummy instance changes can then be used to estimate the frequency of the true heavy hitter.


\end{proof}

\section{Efficient Algorithms}
In the last section, we saw the Johnson-Lindenstrauss algorithm which gave almost optimal accuracy guarantees, but had running time linear in $|N|$. In this section, we consider efficient algorithms with running time poly$(n, \log |N|)$. The first is an application of a sublinear time algorithm from the compressed sensing literature, and the second is a group-testing approach made efficient by the use of a particular family of pairwise-independent hash functions.
\subsection{GLPS Sparse Recovery}
\label{sec-strauss}

In this section we adapt a sophisticated algorithm from compressed
sensing. Gilbert, et al.~\cite{GLPS09} present a sparse recovery algorithm (we
refer to it as the GLPS algorithm) that takes linear measurements from a sparse vector,
and reconstructs the original vector to high accuracy. Importantly, the algorithm runs in time
polylogarithmic in $|N|$, and polynomial in the sparsity parameter of the
vector. We remark that our database $v$ is $n$-sparse: it has at most $n$ non-zero components. In the rest of this section, we will write $v_s$ to denote the vector $v$ truncated to contain only its $s$ largest components.

Let $s$ be a sparsity parameter, and let $\gamma$ be a tunable approximation
level. The GLPS algorithm runs in time $O((s/\gamma) \log^{c} N)$, and makes $m
= O(s \log(N/s) /\gamma))$ measurements from a specially constructed
(randomized) $\{-1,0,1\}$ valued matrix, which we will denote $\Phi$. Given
measurements $u = \Phi v + z$ (where $z$ is arbitrary noise), the algorithm
returns an approximation $\hat{v}$, with error guarantee
\begin{equation}
\label{eq-strauss-bound}
\|v - \hat{v}\|_{2} \leq (1 + \gamma) \|v - v_{s}\|_{2}
	+ \gamma \log(s) \frac{\|z\|_{2}}{\kappa}
\end{equation}
with probability at least $3/4$, where $\kappa = O(\log(s) \log^{1/2}(N / s))$.
Though the GLPS bound only occurs with probability $3/4$, the success
probability can be made arbitrarily close to $1$ by running this algorithm
several times. In particular, using the amplification lemma from~\cite{GLMRT10},
the failure probability can be driven down to $\beta$ at a cost of only a factor
of $\log(1/\beta)$ in the accuracy. In what follows, we analyze a single run of
the algorithm, with $\gamma = 1$.

\begin{algorithm}
\caption{GLPS-HH Mechanism}
\label{alg-glps}
\begin{algorithmic}
\REQUIRE Private histograms $v^{i} \in \NN^N, i \in [n]$. GLPS matrix $\Phi$.
Privacy parameters $\epsilon, \delta > 0$.
\ENSURE $p^*$, estimated index of heavy hitter.
\STATE $m \gets s \log(N / s)$
\STATE $b \gets \sqrt{8 m \log(1/\delta)} / \epsilon$
\FOR{$i = 1$ \mbox{to} $n$ users}
\STATE $z^{i} \sim \{\Lap(b)\}^m$
\STATE $q^{i} \gets \Phi v^{i} + z^{i}$
\ENDFOR

\STATE $c \gets \sum_{i = 1}^n q^{i}$
\STATE $\hat{v} \gets GLPS(c, \Phi)$
\STATE $p^* \gets \argmax_p \hat{v}_p$
\RETURN $p^*$
\end{algorithmic}
\end{algorithm}

Next, we will show that GLPS-HH is $(\epsilon, \delta)$-differentially private.

\begin{theorem}
GLPS-HH operates in the local privacy model and is $(\epsilon,
\delta)$-differentially private.
\end{theorem}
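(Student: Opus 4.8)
The plan is to mirror the privacy argument given for JL-HH, since GLPS-HH has the same two-phase structure: each user locally computes a noisy linear sketch of their own data, and all subsequent processing (summation, GLPS reconstruction, and the final $\argmax$) touches the private data only through these noisy sketches. The locality claim is immediate by inspection of \Cref{alg-glps}: the measurement $q^i = \Phi v^i + z^i$ is formed using only user $i$'s histogram $v^i$ and fresh independent noise $z^i$, with no reference to any $v^j$ for $j \neq i$. So the core task is to show that the map $v^i \mapsto q^i$ is $(\epsilon,\delta)$-differentially private in isolation, after which post-processing closure finishes the job.

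First I would observe that $q^i$ is the answer to a batch of $m$ linear queries on $v^i$, namely the coordinates $(\Phi v^i)_j = \sum_k \Phi_{jk} v^i_k$ for $j \in [m]$. The key calculation is the per-query sensitivity: for neighboring inputs with $\|v^i - (v^i)'\|_1 \leq 1$, we have $|(\Phi v^i)_j - (\Phi (v^i)')_j| \leq \max_{j,k} |\Phi_{jk}| \leq 1$, since $\Phi$ is $\{-1,0,1\}$-valued. I would emphasize this as the one substantive difference from the JL-HH proof: there the projection matrix had entries $\pm 1/\sqrt{m}$, giving sensitivity $1/\sqrt{m}$, whereas here the sensitivity is $1$, which is exactly why the noise scale is set to $b = \sqrt{8m\log(1/\delta)}/\epsilon$ rather than the smaller value used in JL-HH.

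Next I would invoke \Cref{thm:laplace-privacy} with sensitivity $c = 1$ and noise scale $b$: each individual coordinate of $q^i$ is $\epsilon_0$-differentially private with
\[ \epsilon_0 = \frac{1}{b} = \frac{\epsilon}{\sqrt{8 m \log(1/\delta)}}. \]
I would then apply the composition theorem \Cref{thm:compose} to the $T = m$ coordinates. The hypothesis of that theorem requires $\epsilon_0 \leq \epsilon / \sqrt{8 T \log(1/\delta)}$, and with $T = m$ this holds with equality by our choice of $b$, so the joint release $q^i$ is $(\epsilon,\delta)$-differentially private. The main (and only) thing to be careful about here is the bookkeeping, making sure the $m$ inside $b$ matches the number $T=m$ of composed mechanisms so that the composition bound closes at exactly $(\epsilon,\delta)$ rather than something weaker.

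Finally I would note that once each user's contribution $q^i$ is $(\epsilon,\delta)$-private, the remaining steps of the algorithm act only on the released measurements: the central party forms $c = \sum_i q^i$, runs $GLPS(c,\Phi)$, and takes $\argmax_p \hat{v}_p$. None of these steps re-accesses the private data, so by the post-processing invariance of differential privacy the final output $p^*$ inherits the $(\epsilon,\delta)$ guarantee, completing the proof. I do not anticipate a genuine obstacle; the argument is routine, and the only place requiring attention is the sensitivity-versus-noise-scale accounting described above.
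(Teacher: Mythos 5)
Your proposal is correct and follows essentially the same route as the paper's proof: locality by inspection, sensitivity-$1$ linear queries (since $\Phi$ is $\{-1,0,1\}$-valued), Laplace noise calibrated so that per-query privacy composes via \Cref{thm:compose} to exactly $(\epsilon,\delta)$, and post-processing for the remaining steps. The paper's version is just a terser statement of this same argument; your explicit sensitivity accounting (contrasting with the $1/\sqrt{m}$ sensitivity in JL-HH) is a correct and worthwhile elaboration of what the paper leaves implicit.
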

\begin{proof}
The algorithm operates in the local privacy model because each individual $i$
compute $\Phi v^i + z^i$ independently, which corresponds to answering $m$
linear queries, each with sensitivity $1$. The magnitude of the Laplace noise
added, $z^i$, is then sufficient (by \Cref{thm:compose}) to guarantee
$(\epsilon,\delta)$-differential privacy for each individual.
\end{proof}

Next, we will bound the error that we introduce by adding noise for
differential privacy.

\begin{theorem}
Let $\beta > 0$ be given. GLPS-HH is $(\alpha, 3/4-\beta)$-accurate for the heavy
hitters problem, with \[\alpha =\tilde{O}\left(\frac{n^{5/6} \log^{1/3}
      (1/\beta) \log^{1/6} N \log^{1/6}(1/\delta)}{\epsilon^{1/3}}\right)\]
\end{theorem}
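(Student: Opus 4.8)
The plan is to combine the GLPS sparse recovery error guarantee \eqref{eq-strauss-bound} with the Laplace tail bound in \Cref{thm-laplace-sum}, and then optimize over the free sparsity parameter $s$. The key observation is that the $\ell_2$ accuracy of the recovered vector $\hat{v}$ directly controls the accuracy of $\argmax_p \hat{v}_p$: if $\|v - \hat{v}\|_2 \leq \alpha$, then in particular $\|v - \hat{v}\|_\infty \leq \alpha$, so the index maximizing $\hat{v}$ must have true count within $2\alpha$ of the heavy hitter's true count $fhh(v)$. Thus the entire task reduces to bounding the right-hand side of \eqref{eq-strauss-bound} for the aggregated noisy measurement $c = \Phi v + z$, where $z = \sum_{i=1}^n z^i$ is a sum of independent Laplace vectors.

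First I would control the two error terms in \eqref{eq-strauss-bound} separately. For the tail term $\gamma \log(s)\|z\|_2/\kappa$, note that each of the $m$ coordinates of $z$ is a sum of $n$ i.i.d.\ $\Lap(b)$ variables with $b = \sqrt{8m\log(1/\delta)}/\epsilon$, so by \Cref{thm-laplace-sum} each coordinate is $O(b\sqrt{n}\log(m/\beta))$ in magnitude with high probability, giving $\|z\|_2 = O(b\sqrt{nm}\log(m/\beta))$ after a union bound over the $m$ coordinates. Substituting $\kappa = O(\log^2(s)\log(N/s))$, $m = s\log(N/s)$, and $\gamma = 1$ yields a bound on this term in terms of $s$, $n$, $N$, $\epsilon$, $\delta$, and $\beta$. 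For the head term $(1+\gamma)\|v - v_s\|_2$, I would use that $v$ is $n$-sparse with $\|v\|_1 = n$; the tail of an $n$-sparse nonnegative integer vector of total mass $n$ past its top $s$ coordinates has $\ell_2$-norm at most $O(n/\sqrt{s})$ in the worst case (the mass is spread over at most $n$ coordinates, and truncating leaves total mass $\leq n$ distributed so that $\|v - v_s\|_2 \leq \sqrt{\|v-v_s\|_1 \cdot \|v - v_s\|_\infty}$ with $\|v-v_s\|_\infty \leq n/s$).

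The crux of the argument is then the optimization over $s$: the head term $\|v-v_s\|_2$ \emph{decreases} in $s$ while the noise term $\|z\|_2/\kappa$, through its dependence on $m = s\log(N/s)$, \emph{increases} in $s$. Balancing these two competing terms is what produces the exponent $5/6$ and the final accuracy $\alpha = O\!\left(\frac{n^{5/6}\log^{1/3}(1/\beta)\log\log N \log^{1/6}(1/\delta)}{\epsilon^{1/3}}\right)$. I expect this balancing step to be the main obstacle: one must carefully track how the polylogarithmic factors in $\kappa$, $m$, and the Laplace tail bound combine, since they determine whether the $\log\log N$ and $\log^{1/6}(1/\delta)$ factors come out correctly, and a sloppy bound on $\|z\|_2$ or on the truncation tail will spoil the exponent.

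Finally I would assemble the pieces: condition on the GLPS recovery succeeding (probability $3/4$) and on the Laplace tail bound holding (probability $1-\beta$), so that by a union bound both hold with probability at least $3/4 - \beta$. Under this conditioning, $\|v - \hat{v}\|_2 \leq \alpha$ for the optimized $\alpha$ above, and the reduction from $\ell_2$ recovery error to $\argmax$ accuracy described in the first paragraph gives that $p^* = \argmax_p \hat{v}_p$ satisfies $v_{p^*} \geq fhh(v) - O(\alpha)$, establishing $(\alpha, 3/4-\beta)$-accuracy as claimed. The privacy of the mechanism is already established in the preceding theorem, so no further work is needed there.
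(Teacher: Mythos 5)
Your proposal is correct and follows essentially the same route as the paper's proof: bound $\|z\|_2$ via the Laplace tail bound of \Cref{thm-laplace-sum} with a union bound over the $m$ coordinates, plug into the GLPS guarantee \eqref{eq-strauss-bound} with $\gamma = 1$, bound the truncation error by $\|v - v_s\|_2 \leq \|v\|_1/\sqrt{s} = O(n/\sqrt{s})$, and balance the two terms over $s$ to get the $n^{5/6}$ exponent, with the $3/4 - \beta$ success probability coming from the same union bound. The only (harmless) differences are that you prove the truncation bound by an elementary interpolation argument where the paper cites a lemma of \cite{GSTV07}, and you spell out the reduction from $\ell_\infty$ recovery error to $\argmax$ accuracy, which the paper leaves implicit.
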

\begin{proof}
Let $b = \sqrt{8m \log(1/\delta)}/\epsilon$.  Let $v$ denote the combined
private database, and let $\hat{v}$ denote the estimated private database
returned by GLPS. GLPS-HH uses the GLPS algorithm with measurements $c = \Phi v
+ z, z = \sum_i z^{i}$, where the noise vector $z$ has each entry drawn from
$\sum_{i = 1}^n \Lap(b)$. From \Cref{thm-laplace-sum}, we have the bound
(for a fixed index $i$)
\[ \Pr[ |z_i| \leq O(b \sqrt{n} \log(m/\beta)) ] \geq 1 - \beta / m \]
Taking a union bound over all $m$ indices, we find this bound holds over all
components with probability at least $1 - \beta$. Thus we can bound
\[
  \|z\|_2 \leq O(b \sqrt{nm} \log(m/\beta)) = O\left(\frac{s \log(N/s)
\log (s\log (N/s)/\beta) \sqrt{n \log(1/\delta)}}{\epsilon}\right)
\]
With probability $3/4$, we have the GLPS bound \Cref{eq-strauss-bound} with
$\gamma = 1$, from which we can estimate
\[
  \|v - \hat{v}\|_\infty \leq \|v - \hat{v}\|_2 \leq 2\|v - v_s\|_2 +
O \left(\frac{s \log^{1/2}(N/s) \log(s \log(N/s)/\beta) \sqrt{n
\log(1/\delta)}}{\epsilon}\right)
\]
By a Lemma from~\cite{GSTV07}, we have $\|v - v_s\|_2 \leq \|v\|_1 / \sqrt{s}$.
Now, in the worst case, $\|v\|_1 = O(n)$, and we need to choose $s$ to balance
the errors in
\[
  \|v - \hat{v}\|_\infty \leq 2\frac{n}{\sqrt{s}} + O \left(\frac{s
      \log^{1/2}(N/s) \log(s \log (N/s)/\beta) \sqrt{n
        \log(1/\delta)}}{\epsilon}\right)
\]
Setting $s$ to be
\[ s = \left( \frac{\epsilon}{\log(1/\beta)} \sqrt{\frac{n}{\log N \log
        (1/\delta)}} \right)^{2/3} \]
we get an error bound
\[ \|v - \hat{v}\|_\infty \leq \tilde{O}\left(\frac{n^{5/6} \log^{1/3}
      (1/\beta)\log^{1/6} N \log^{1/6}(1/\delta)}{\epsilon^{1/3}}
  \right)\]
where we write only the dominant factor for each variable (hiding $\log \log$
factors and $\log \epsilon, \log n$ factors).  Thus, with probability at least
$3/4 - \beta$, we get the desired accuracy.
\end{proof}

\subsection{The Bucket mechanism}
\label{sec-bucket}
In this section we present a second computationally efficient algorithm, based on group-testing and a specific family of pairwise independent hash functions.


\begin{algorithm}
\caption{The Bucket Mechanism}
\label{alg-bucket}
\begin{algorithmic}
\REQUIRE Private labels $v^i \in [N], i \in [n]$. Failure probability $\beta >
0$. Privacy parameters $\epsilon, \delta > 0$.
\ENSURE $p^*$, the index of the heavy hitter.

\STATE $F \gets \{0, 1\}^{\log N} \setminus 0$
\FOR {$i = 1$ to $8 \log (1 / \beta)$ trials}
\STATE $H \in \{0, 1\}^{\log(12N) \times \log N} \gets $ Draw $\log(12N)$ rows from $F$,
uniformly at random.

\STATE $u \in \RR^{\log(12N)} \gets 0$
\FOR {$j = 1$ to $n$ users}
\STATE $b \in \{0, 1\}^{\log N} \gets $ binary expansion of $v^j$.
\STATE $s \gets Hb \pmod{2}$
\STATE $z \sim \left\{\Lap\left(\frac{8 \sqrt{\log(12N) \log(1/\beta) \log
(1/\delta)}}{\epsilon}\right)\right\}^{\log(12N)}$
\STATE $u \gets u + s + z$
\ENDFOR

\FOR {$k = 1$ to $\log(12N)$ hash functions}
\STATE $b_k \gets
\left\{
  \begin{array}{ll}
    1 &: u_k > n/2 \\
    0 &: \mbox{otherwise}
  \end{array}
\right.$
\ENDFOR

\STATE $w_i \gets
\left\{
  \begin{array}{ll}
    x_0 &: Hx_0 = b \pmod{2} \\
    \perp &: Hx = b \pmod{2} \mbox{\,infeasible}
  \end{array}
\right.$
\ENDFOR

\STATE $w^* \gets $ most frequent $w_i$, ignoring $\perp$
\RETURN $p^* \gets w^*$ converted from binary

\end{algorithmic}
\end{algorithm}

At a high level, our algorithm, referred to as the {\em Bucket mechanism}, runs
$O(\log(1/\beta))$ \emph{trials} consisting of $O(\log N)$ 0/1 valued hash
functions in each trial. For a given trial, the mechanism hashes each universe
element into one of two buckets for each hash function. Then, the mechanism
tries to find an element that hashes into the bucket with more weight (the
\emph{majority bucket}) for all the hash functions. If there is such an element,
it is a candidate for the heavy hitter for that trial. Finally, the mechanism
takes a majority vote over the candidates from each trial to output a final
heavy hitter.

For efficiency purposes we do not use truly random hash functions, but instead
rely on a particular family of pairwise-independent hash functions which can be
expressed as linear functions on the bits of a universe element. Specifically,
each function $h$ in the family maps $[N]$ to $\{0,1\}$, and is parameterized by
a bit-string $r \in \{0,1\}^{\log |N|}$. In particular, given any bit-string $r
\in \{0,1\}^{\log |N|}$, we define $h_r(x) = \langle r, b(x) \rangle$, where
$b(x)$ denotes the binary representation of $x$. $r$ is chosen uniformly at
random from the set of all strings $r \in \{0,1\}^{\log |N|}\setminus 0^{\log
|N|}$. Given hash functions of this form, and a list of target buckets, the
problem of finding an element that hashing to all of the target buckets is
equivalent to solving a linear system mod $2$, which can be done efficiently.
Our family of hash functions operates on the element label in binary, hence the
conversions to and from binary in the algorithm.

We will now show that the bucket mechanism is $(\epsilon,\delta)$-differentially
private, runs in time poly$(n, \log |N|)$, and assuming a certain condition on
the distribution over universe elements, returns the exact heavy hitter. The
accuracy analysis proceeds in two steps: first, we argue that with constant
probability $> 1/2$, the heavy hitter is the unique element hashed into the
larger bucket by every hash function in a given trial. Then, we argue that with
high probability, the proceeding event indeed occurs in the majority of trials,
and so the majority vote among all trials returns the true heavy hitter.

\begin{theorem}
The Bucket mechanism operates in the local model and is $(\epsilon, \delta)$-differentially private.
\end{theorem}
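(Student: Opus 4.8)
The plan is to verify that each individual's entire interaction with the protocol is $(\epsilon,\delta)$-differentially private in isolation, which by the definition of the local model suffices. First I would observe that the mechanism operates locally: within each trial, user $j$ forms their message $s + z = Hb(v^j) + z$ using only their own label $v^j$ together with the matrix $H$, which is drawn from public (data-independent) randomness and shared across all users in that trial. Thus no user's computation touches any other user's data, and since $H$ is independent of the private inputs we may analyze privacy conditioned on any fixed $H$.

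Next I would count the queries each user answers and bound their sensitivity. Over the course of the algorithm there are $8\log(1/\beta)$ trials, and in each trial user $j$ releases the $\log(12N)$-dimensional noisy vector $Hb(v^j) + z$. So the joint output observed for user $j$ is a sequence of $T = 8\log(1/\beta)\cdot\log(12N)$ real-valued linear queries. Each such query is a single coordinate of $Hb(v^j) \bmod 2$, which takes values in $\{0,1\}$; hence replacing $v^j$ by any neighboring label changes each coordinate by at most $1$, giving sensitivity $c = 1$, exactly as in the analyses of JL-HH and GLPS-HH.

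Then I would calibrate the Laplace noise through composition. By \Cref{thm:laplace-privacy}, perturbing a sensitivity-$1$ query with $\Lap(1/\epsilon')$ noise yields $\epsilon'$-differential privacy. Setting $\epsilon' = \epsilon / \sqrt{8T\log(1/\delta)}$, the per-query noise scale required is
\[ \frac{1}{\epsilon'} = \frac{\sqrt{8T\log(1/\delta)}}{\epsilon} = \frac{\sqrt{64\,\log(1/\beta)\log(12N)\log(1/\delta)}}{\epsilon} = \frac{8\sqrt{\log(12N)\log(1/\beta)\log(1/\delta)}}{\epsilon}, \]
which is precisely the scale used in the algorithm. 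Applying \Cref{thm:compose} with these $T$ mechanisms then certifies that the joint output for each user is $(\epsilon,\delta)$-differentially private, and since this holds for every user independently (and for every fixed $H$), the Bucket mechanism is $(\epsilon,\delta)$-differentially private in the local model.

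I expect the only real subtlety — rather than a genuine obstacle — to be the bookkeeping: correctly identifying the total compositional count as $T = 8\log(1/\beta)\log(12N)$ rather than merely $\log(12N)$ per trial, and confirming that the $\{0,1\}$-valued outputs force sensitivity exactly $1$. Given these, the noise calibration matches the composition requirement on the nose, so the argument reduces to the routine Laplace-plus-composition template already used twice in the paper.
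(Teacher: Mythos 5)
Your proof is correct and takes essentially the same approach as the paper's own (much terser) proof: count the per-user queries across all trials, note each is a $\{0,1\}$-valued, hence $1$-sensitive, query on that user's data alone, and invoke \Cref{thm:laplace-privacy} together with the composition theorem \Cref{thm:compose} to see that the noise scale $8\sqrt{\log(12N)\log(1/\beta)\log(1/\delta)}/\epsilon$ is exactly what is required. Your write-up merely makes explicit two points the paper leaves implicit — the calibration arithmetic and the fact that $H$ is public, data-independent randomness — so there is no substantive difference.
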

\begin{proof}
Each party answers $\log(12N)$ $1$-sensitive queries about only their own data for each trial, with a total
of $8 \log(1/\beta)$ trials. By \Cref{thm:compose}, the correct amount of
noise is added to preserve $(\epsilon, \delta)$-differential privacy.
\end{proof}

\begin{theorem}
For fixed $\epsilon, \delta > 0$ and failure probability $\beta > 0$, the Bucket
mechanism runs in time $O(n\log(1/\beta)\log^3 N)$.
\end{theorem}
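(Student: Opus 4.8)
The plan is to bound the running time by accounting for the cost of each primitive operation inside the two nested loops of the Bucket mechanism and then summing over all iterations. The outer loop executes $8\log(1/\beta) = O(\log(1/\beta))$ trials, and the final step (computing the most frequent $w_i$ over the $O(\log(1/\beta))$ candidates) is a lower-order term, so it suffices to show that each trial costs $O(n\log^3 N)$.

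For a single trial I would account for three pieces of work. The per-trial setup draws $H$ by sampling $\log(12N)$ rows, each a string in $\{0,1\}^{\log N}$, at cost $O(\log^2 N)$. The inner loop over the $n$ users is the term carrying the factor of $n$: for each user $j$, forming the binary expansion of $v^j$ costs $O(\log N)$, computing $s = Hb \pmod 2$ is an inner product of two length-$\log N$ vectors for each of the $\log(12N)$ rows, hence $O(\log^2 N)$, and sampling the noise vector $z$ and performing the update $u \gets u + s + z$ costs a further $O(\log N)$; summing over users gives $O(n\log^2 N)$. Finally, thresholding the accumulated buckets $u$ against $n/2$ to form the target string $b$ costs $O(\log N)$.

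The remaining per-trial work is the recovery of $w_i$, which requires finding an $x_0$ with $Hx_0 = b \pmod 2$, i.e. solving a linear system over $\FF_2$ with $\log(12N)$ equations in $\log N$ unknowns. Gaussian elimination over $\FF_2$ produces a solution, or certifies infeasibility, in $O(\log^3 N)$ time. Adding the pieces, a trial costs $O(n\log^2 N + \log^3 N)$; since $n \geq 1$ and $\log N \geq 1$, both summands are at most $n\log^3 N$, so each trial is $O(n\log^3 N)$, and multiplying by the $O(\log(1/\beta))$ trials yields the claimed $O(n\log(1/\beta)\log^3 N)$. (I would note in passing that the linear system is in fact solved only once per trial, not once per user, which keeps the cubic term out of the $n$-fold loop; the stated bound is mildly loose, as the true per-trial cost is $O(n\log^2 N + \log^3 N)$.)

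There is no deep obstacle here, as this is a routine resource accounting; the one step warranting care is conceptual rather than computational, namely the justification that recovering $w_i$ is exactly the linear-algebra problem just described. This rests on the fact, established in the preceding discussion, that the hash family $h_r(x) = \langle r, b(x)\rangle$ is linear in the bits of $x$, so that the constraint of hashing to a prescribed bucket under every row of $H$ is literally the system $Hx = b \pmod 2$; I would simply invoke this reduction together with the standard cubic bound for Gaussian elimination over $\FF_2$.
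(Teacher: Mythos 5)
Your proof is correct and follows the same direct operation-counting strategy as the paper's proof, but it is in fact more complete than what the paper provides. The paper's argument counts only the inner loop over users --- $O(\log N)$ hash evaluations per user, each an inner product of two $(\log N)$-bit strings, for $O(n\log(1/\beta)\log^2 N)$ total --- asserts that this step dominates, and stops there; it never accounts for the recovery of $w_i$. Your reduction of that recovery step to solving $Hx = b \pmod{2}$, a system of $\log(12N)$ equations in $\log N$ unknowns over $\FF_2$, at cost $O(\log^3 N)$ per trial via Gaussian elimination, fills exactly the step the paper glosses over --- and it is the only step in the algorithm with a cubic dependence on $\log N$, so it is what actually justifies the exponent in the stated bound (the paper's own accounting only supports $\log^2 N$, with the theorem's $\log^3 N$ holding because it is weaker). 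Your closing observation that the true cost is $O\left(n\log(1/\beta)\log^2 N + \log(1/\beta)\log^3 N\right)$, making the stated bound mildly loose, is accurate and consistent with both accountings.
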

\begin{proof}
The step that dominates the run time is the inner loop over each party. For each
user, the algorithm evaluates $O(\log N)$ hash functions. Each evaluation
calculates the inner product of two $\log N$-length bit strings, and there are
$O(\log N)$ hash functions. So, each user takes time $\log^2 N$ per trial. With
$n$ users and $O(\log(1/\beta))$ trials, the result follows.
\end{proof}

We first prove a simple tail bound on sums of
$k$-wise independent random variables, modifying a result given by Bellare and
Rompel~\cite{BR94}.

\begin{lemma}
\label{lemma-k-wise-azuma}
Let $k$ be even. Take a $k$-independent set of random variables $X_i$, with
$0 \leq X_i \leq c_i$, let $X = \sum X_i$, and let $\mu = \EE[X]$.
We have:
\[ \Pr[|X - \mu| > t] \leq C_k \left(\frac{c k}{t^2}\right)^{k/2} \]
with $c = \sum c_i^2$, and $C_k = 2 \sqrt{\pi k}e^{k/2 - 1/(6k)} \leq 1.0004$.
\end{lemma}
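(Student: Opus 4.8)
The natural route is the \emph{method of moments}. Since $k$ is even, $(X-\mu)^k$ is a nonnegative random variable, so Markov's inequality gives
\[ \Pr[|X-\mu| > t] = \Pr[(X-\mu)^k > t^k] \le \frac{\EE[(X-\mu)^k]}{t^k}. \]
Thus the whole problem reduces to the $k$-th central moment bound $\EE[(X-\mu)^k] \le C_k (ck)^{k/2}$, after which the claimed tail bound is immediate. To analyze this moment I would first \emph{center} the variables, writing $Y_i = X_i - \EE[X_i]$, so that $\EE[Y_i] = 0$, $|Y_i| \le c_i$ (since both $X_i$ and its mean lie in $[0,c_i]$), and $X - \mu = \sum_i Y_i$. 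Crucially, each $Y_i$ is a deterministic shift of $X_i$, so the $Y_i$ remain $k$-wise independent.

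Next I would expand multinomially,
\[ \EE\Big[\big(\textstyle\sum_i Y_i\big)^k\Big] = \sum_{\substack{(a_i):\, \sum_i a_i = k}} \frac{k!}{\prod_i a_i!} \prod_i \EE[Y_i^{a_i}], \]
and exploit the two defining features of the $Y_i$. Any monomial involves at most $k$ distinct indices, so $k$-wise independence lets the expectation factor as $\prod_i \EE[Y_i^{a_i}]$; and if any exponent $a_i$ equals $1$, the corresponding factor is $\EE[Y_i] = 0$, which kills the entire term. Hence only \emph{singleton-free} terms survive, i.e.\ those in which every present index has multiplicity at least $2$, and such terms use at most $k/2$ distinct indices. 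Reindexing the survivors by set partitions $\pi$ of $[k]$ into blocks of size $\ge 2$ (with an injective assignment of indices to blocks), and bounding $|\EE[Y_i^{a_i}]| \le \EE|Y_i|^{a_i} \le c_i^{a_i}$, the index-sum attached to a fixed $\pi$ factors over blocks and is at most $\prod_{B \in \pi}\big(\sum_i c_i^{|B|}\big)$. For each block $|B| \ge 2$, monotonicity of $\ell_p$-norms gives $\sum_i c_i^{|B|} = \|c\|_{|B|}^{|B|} \le \|c\|_2^{|B|} = c^{|B|/2}$, so the product collapses to $c^{k/2}$, uniformly over $\pi$.

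Assembling these estimates yields $\EE[(X-\mu)^k] \le Q_k\, c^{k/2}$, where $Q_k$ is the number of partitions of $[k]$ into blocks of size at least $2$; the remaining task is to bound $Q_k \le C_k\, k^{k/2}$, which then gives $\EE[(X-\mu)^k] \le C_k (ck)^{k/2}$ as required. Here the dominant contribution comes from the perfect matchings (all blocks of size $2$), of which there are $(k-1)!! = k!/(2^{k/2}(k/2)!)$ — the $k$-th moment of a standard Gaussian — while partitions containing a block of size $\ge 3$ supply the remaining terms. I expect the \textbf{main obstacle} to be exactly this combinatorial accounting: controlling the total count $Q_k$ and extracting the precise constant $C_k$ through careful Stirling estimates of $(k-1)!!$. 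This is where the argument genuinely departs from Bellare--Rompel, who treat the homogeneous case $c_i \equiv 1$ (there $|Y_i|^{a_i} \le Y_i^2$ already collapses every block to a variance factor); the weighted version instead leans on the $\ell_p$-norm inequality above to recover the clean $c = \sum_i c_i^2$ dependence.
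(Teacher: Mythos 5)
Your first moves coincide with the paper's: Markov's inequality applied to the even central moment, and the observation that $k$-wise independence lets you treat the $X_i$ as fully independent when computing $\EE[(X-\mu)^k]$, since the expansion involves joint moments of at most $k$ variables. After that the routes diverge, and yours has a genuine gap at precisely the step you defer. Everything up to $\EE[(X-\mu)^k]\le Q_k\,c^{k/2}$ is correct, but the closing claim $Q_k\le C_k k^{k/2}$ is false once $k$ is moderately large. Note first that the target constant is tiny: the paper's printed $C_k$ has a sign typo (the derivation and the claim $C_k\le 1.0004$ show the intended constant is $2\sqrt{\pi k}\,e^{-k/2+1/(6k)}$), so the budget is $C_k k^{k/2}\approx 2\sqrt{\pi k}\,(k/e)^{k/2}\approx\sqrt{2\pi k}\,(k-1)!!$ --- only a $\sqrt{k}$ factor above the number of perfect matchings. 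Meanwhile $Q_k$ contains, besides the $(k-1)!!$ matchings, the partitions made of one $4$-block plus pairs, of which there are $\binom{k}{4}(k-5)!!=\tfrac{k(k-2)}{24}(k-1)!!$; already for $k\ge 18$ this single family exceeds $\sqrt{2\pi k}\,(k-1)!!$, and asymptotically $Q_k=(k/\ln k)^{k(1+o(1))}$ (Bell-type growth), which dwarfs $k^{k/2}$. So no estimate on $Q_k$ can close your argument. The lemma is nevertheless true, because non-matching partitions genuinely contribute far less than $c^{k/2}$: with $M$ identical variables, $c_i\equiv 1$, we have $c=M$, and a partition with $r$ blocks actually contributes on the order of $M^r\ll M^{k/2}=c^{k/2}$ unless $r=k/2$. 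Your two relaxations --- replacing the injective block-to-index assignment by an unrestricted one, and $\|c\|_{|B|}\le\|c\|_2$ --- discard exactly this decay, which is why your per-partition weight becomes a uniform $c^{k/2}$ and the count then kills you. (For $k=2$, the only case the paper actually invokes, in \Cref{lemma-heavy-wins}, your argument reduces to Chebyshev and is fine.)

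The paper sidesteps the combinatorics entirely: after the reduction to honestly independent variables it never expands the moment, but instead writes $\EE[(X-\mu)^k]=\int_0^\infty\Pr[|X-\mu|>s^{1/k}]\,ds$, bounds the integrand by the Azuma--Hoeffding tail $2\exp\left(-s^{2/k}/(2c)\right)$ --- legitimate once independence is in hand --- and evaluates the integral to get $2(2c)^{k/2}(k/2)!$, whence the stated constant by Stirling. In effect, the known exponential tail bound performs all of the partition bookkeeping with the correct, non-uniform weights. If you insist on the combinatorial route, you must retain the dependence on the number of blocks (and/or the injectivity constraint), which amounts to proving a Rosenthal-type inequality; that is substantially more delicate than Stirling estimates on $(k-1)!!$, and the tail-integration trick is the cheap and correct way out.
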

\begin{proof}
By Markov's inequality, we can write:
\[ \Pr[|X - \mu| > t] = \Pr[(X - \mu)^k > t^k] \leq \frac{\EE[(X - \mu)^k]}{t^k}
\]
However, if we expand out the product, we find that we only need to consider the
expected value of products of at most $k$ of the variables $X_i$. Thus, without
loss of generality, we may consider $X_i$ to be independent for the following
calculation.

\begin{align*}
\EE[(X - \mu)^k] &= \int_0^\infty \Pr[(X - \mu)^k > s] ds \\
&= \int_0^\infty \Pr[|X - \mu| > s^{1/k}] ds \\
&\leq \int_0^\infty 2 \exp\left(-\frac{s^{2/k}}{2 \sum c_i^2}\right) ds
\end{align*}
where we have used that the $X_i$ are independent in order to applied Azuma's
inequality.  By a change of variables, and letting $c = \sum c_i^2$, we have

\begin{align*}
\EE[(X - \mu)^k] &\leq \int_0^\infty k (2c)^{k/2} e^{-y} y^{k/2 - 1} dy \\
&= (2c)^{k/2} k \Gamma(k/2 - 1) \\
&= 2(2c)^{k/2} \left(\frac{k}{2}\right)! \\
&\leq 2c^{k/2} \sqrt{\pi k} \left(\frac{k}{e}\right)^{k/2} e^{1/6k} \\
\end{align*}
where we have used Stirling's approximation in the last step. Now, we get
\[ \Pr[|X - \mu| > t] = \frac{\EE[(X - \mu)^k]}{t^k} \leq C_k \left(\frac{c
k}{t^2}\right)^{k/2} \]
as desired.
\end{proof}

\begin{lemma}
\label{lemma-heavy-wins}
Let $\beta, \epsilon, \delta > 0$ be given, and consider a single trial in the
Bucket mechanism. Without loss of generality, suppose that the elements are
labeled in decreasing order of count, with counts $v_1 \geq v_2 \geq \cdots \geq
v_N$. Write $c = \sum_{i=2}^N v_i^2$, let $k_1$ be the number of hash
functions per trial, and $k_2$ be the number of trials. If we have the condition
\[ v_1 \geq 2\sqrt{\frac{12 k_1 c}{\beta}} +
b(k_1,k_2)\sqrt{6n} \log \left(\frac{6k_1}{\beta}\right) \]
where $b$ is the parameter for $(\epsilon, \delta)$-differential privacy:
\[ b(k_1,k_2) = \frac{\sqrt{8 k_1 k_2 \log(1/\delta)}}{\epsilon} \]
then with probability at least $1-2\beta/3$, the heavy hitter is hashed into the
larger bucket for each hash function in the trial.
\end{lemma}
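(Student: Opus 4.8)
The plan is to analyze a single hash function in the trial, reduce the event ``the heavy hitter lands in the larger bucket'' to a concentration statement, and then union bound over the $k_1$ hash functions. Fix one hash function $h = h_k$ and write $S = \sum_{i \geq 2} v_i \mathbf{1}[h(i) = h(1)]$ for the total mass of non-heavy-hitter elements sharing the heavy hitter's bucket and $T = \sum_{i\geq 2} v_i \mathbf{1}[h(i) \neq h(1)]$ for the mass in the opposite bucket, so that $S + T = n - v_1$ and the true count in the heavy hitter's bucket is $v_1 + S$. Since each user contributes to exactly one of the two buckets, the two true counts sum to $n$, and so the heavy hitter's bucket is the true majority bucket precisely when $v_1 + S > n/2$. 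The mechanism instead thresholds the noisy count $u_k$ (the true count in bucket $1$ plus $Z_k := \sum_{j=1}^n z_{jk}$, a sum of $n$ i.i.d.\ $\Lap(b)$ variables with $b = b(k_1,k_2)$) against $n/2$. Using $\EE[S] = \EE[T] = (n-v_1)/2$, a short calculation in each of the two cases $h(1)=1$ and $h(1)=0$ shows that the heavy hitter is placed in the declared majority bucket whenever $v_1 > 2|S - \EE S| + 2|Z_k|$ (and $|S - \EE S| = |T - \EE T|$ because $S+T$ is fixed). So it suffices to control these two fluctuation terms.

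For the hash fluctuation $|S - \EE S|$, the key point is that the indicators $\mathbf{1}[h(i) = h(1)]$ for $i \geq 2$ are pairwise independent: since the family is linear, $h(i) - h(1) = \langle r, b(i) - b(1)\rangle$, and for distinct $i,j \geq 2$ the vectors $b(i)-b(1)$ and $b(j)-b(1)$ are nonzero and linearly independent over $\FF_2$, so the two inner products are jointly uniform (the exclusion of $r = 0^{\log N}$ only perturbs this by a lower-order amount). Hence $\mathrm{Var}(S) \leq \sum_{i\geq 2} v_i^2 = c$, and applying \Cref{lemma-k-wise-azuma} with $k = 2$ (equivalently, Chebyshev) bounds $|S - \EE S|$ by $\sqrt{12 k_1 c/\beta}$ with failure probability at most $\beta/(3k_1)$ for this hash function; this produces the first term $2\sqrt{12 k_1 c/\beta}$ of the hypothesis. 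For the noise fluctuation, \Cref{thm-laplace-sum} applied to $Z_k$ bounds $|Z_k|$ by a quantity of the form $b\sqrt{6n}\log(6k_1/\beta)$ with failure probability at most $\beta/(3k_1)$, matching the second term.

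Finally I would take a union bound over the $k_1$ hash functions in the trial and over the two failure modes, so that with probability at least $1 - 2\beta/3$ every hash function simultaneously obeys both bounds; under the hypothesized lower bound on $v_1$ this forces $v_1 > 2|S - \EE S| + 2|Z_k|$ for each of the $k_1$ functions, placing the heavy hitter in the larger bucket for every hash function, as claimed. The step I expect to be the main obstacle is the concentration of $S$ under only limited independence: one must justify that the linear hash family yields enough independence among the ``same-bucket'' indicators to get the variance bound in terms of $c = \sum_{i\geq2} v_i^2$ (rather than a weaker $\ell_1$-type quantity) and handle the $r \neq 0^{\log N}$ restriction, since it is precisely this $\ell_2$ control of the non-heavy-hitter tail that lets the success condition be phrased as a comparison between $v_1$ and $\sqrt{c}$.
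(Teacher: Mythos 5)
Your proposal follows essentially the same route as the paper's proof: a per-hash-function analysis using pairwise independence together with the $k=2$ case of \Cref{lemma-k-wise-azuma} (i.e.\ Chebyshev) to control the hashed mass, \Cref{thm-laplace-sum} to control the Laplace noise, and a union bound over the $k_1$ hash functions and two failure modes; tracking the collision mass $S=\sum_{i\geq 2} v_i \mathbf{1}[h(i)=h(1)]$ instead of the paper's bucket-one mass $f=\sum_{i\geq 2} v_i\mathbf{1}[h(i)=1]$ is only a bookkeeping difference (your linearity argument for pairwise independence of the collision indicators is the extra price of that choice, and you supply it). One remark: your reduction correctly identifies the sufficient condition as $v_1 > 2|S-\EE S| + 2|Z_k|$, yet the lemma's hypothesis carries the noise term with coefficient $1$, so your assertion that the Laplace bound ``matches the second term'' is off by a factor of $2$ --- but the paper's own proof has exactly the same slack, just placed earlier (it asserts $v_1 \geq 2|f-\mu| + |z|$ suffices, although thresholding the noisy bucket-one count against $n/2$ effectively doubles the noise in the bucket-versus-bucket comparison), so this is an inherited constant-factor looseness affecting neither argument's structure nor the asymptotic statement, not a gap you introduced.
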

\begin{proof}
First consider a single hash function. If we define random variables $X_i, i \in
[N]$ by:
\begin{equation*}
  X_i = \left\{
    \begin{array}{ll}
      v_i & : i \mbox{\,is hashed to bucket\,} 1 \\
      0 & : \mbox{otherwise} \\
    \end{array}
  \right.
\end{equation*}
and the function $f(X_2, \cdots, X_N) = \sum_{i = 2}^N X_i$, we show that the
true heavy hitter will be hashed to the larger bucket (with high probability) if
$f$ does not deviate from the mean by too much. If $f$ is close to the mean,
then no matter which bucket the heavy hitter is hashed to, that will become the
larger bucket. However, we will need to keep track of the noise that will be
added to preserve differential privacy. We want $v_1$ to be large enough to
overcome the noise (with high probability).

More precisely, by \Cref{thm-laplace-sum}, the sum of $n$ Laplace
noise terms will be bounded by $b \sqrt{6n}\log(6k_1/\beta)$, with
probability at least $1 - \beta/3k_1$. We also know that the collection $\{X_2,
\cdots, X_N\}$ is a pairwise-independent set of random variables, so applying
\Cref{lemma-k-wise-azuma} with $X = f$, and $t = \sqrt{\frac{12 k_1
c}{\beta}}$, we have that
\[ \Pr[|f - \mu| > t] \leq C_2\left(\frac{2c}{t^2}\right) \leq 4
\left(\frac{c}{t^2}\right) = \frac{\beta}{3k_1} \]
with $C_2$ a constant from \Cref{lemma-k-wise-azuma}.
The difference between the counts in the two buckets will be $2|f - \mu|$, so
for the heavy hitter to be hashed to the larger bucket, we need $v_1 \geq 2 |f -
\mu| + |z|$, where $z$ is the Laplace noise term, with high probability.
Taking a union bound over $k_1$ hash functions, we have that
\[ 2|f - \mu| + |z| \leq 2\sqrt{\frac{12 k_1 c}{\beta}} + b
\sqrt{6n}\log\left(\frac{6k_1}{\beta}\right) \]
holds for all the hash functions in this trial with probability at least $1 -
2\beta/3$. But by assumption, $v_1$ is larger than this gap, and so we are done.
\end{proof}

\begin{lemma}
\label{lemma-unique-wins}
Let the notation be as in the previous Lemma, and consider a single trial in the
Bucket mechanism. If we set $k_1 = \log\left(\frac{3 N}{\beta}\right)$, then with
probability at least $1 - \beta/3$, no other element will be hashed to the same
bucket as the heavy hitter through all the hash functions.
\end{lemma}
\begin{proof}
Pick any element $g$ besides the heavy hitter, and consider a single hash
function.  Since the hash function is pairwise-independent, conditioning on
where the heavy hitter is hashed will not change the marginal for where $g$ will
be hashed. Thus, there is a $1/2$ chance of $g$ colliding with the heavy hitter
for any given hash function. Since the hash functions are drawn independently at
random, the chance of this collision happening on every function is
$(1/2)^{k_1} = \beta/(3N)$, by choice of $k_1$. Taking a union bound over the
$N-1$ elements besides the heavy hitter, we have that this collision probability
for all elements is bounded by $\beta/3$, as desired.
\end{proof}

Now, we are ready to put everything together.

\begin{theorem}
\label{thm-bucket-utility}
Let the notation be as in the previous Lemma. If we set $k_1 = \log(12N), k_2 =
8\log(1/\beta)$, and if we have the condition
\[ v_1 \geq 8 \sqrt{2 c \log(12N)} + \frac{8\log(24 \log(12N))\sqrt{6n\log(12N)
\log (1 / \beta) \log (1 / \delta)}}{\epsilon} =
\]
\[
\tilde{\Omega}\left(\frac{\sqrt{\log |N|}\left(\sqrt{c} + \sqrt{n\log
  \frac{1}{\beta}\log\frac{1}{\delta}}\right)}{\epsilon}\right)
\]
then the Bucket mechanism is $(0, \beta)$-accurate for the heavy hitters
problem.
\end{theorem}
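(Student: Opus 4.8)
The plan is to reduce the global $(0,\beta)$-accuracy guarantee to a constant per-trial success probability and then amplify it through the majority vote taken over the $k_2 = 8\log(1/\beta)$ trials. First I would fix the right notion of a successful trial: call a trial \emph{good} if the heavy hitter is the \emph{unique} universe element hashed into the majority bucket by every one of the $k_1$ hash functions of that trial. The key structural point is that a good trial outputs exactly the heavy hitter. Indeed, the bit $b_k$ recorded by the trial equals $1$ precisely when $u_k > n/2$, i.e. when the larger bucket under $h_k$ is bucket $1$; so ``the heavy hitter lies in the larger bucket for every $k$'' is exactly the statement that $b_k = h_k(\mathrm{hh})$ for all $k$, i.e. that the binary expansion of the heavy hitter solves the linear system $Hx = b \pmod 2$. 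Uniqueness -- no other element sharing this entire hash pattern -- then forces the recovered solution $w_i$ to be the heavy hitter rather than some other element or $\perp$.

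Next I would lower bound the probability that a single trial is good by invoking the two preceding lemmas at the theorem's parameters $k_1 = \log(12N)$ and $k_2 = 8\log(1/\beta)$. I would apply \Cref{lemma-heavy-wins} with its internal failure parameter set to the \emph{constant} $1/4$: substituting $b(k_1,k_2) = \sqrt{8 k_1 k_2\log(1/\delta)}/\epsilon$ and this constant reproduces the $v_1$ threshold in the statement (the noise term matches on the nose, since $\log(6k_1/(1/4)) = \log(24\log(12N))$ and $b(k_1,k_2)\sqrt{6n}$ supplies the remaining factors), and under that threshold the heavy hitter lands in the larger bucket on all $k_1$ functions except with probability at most $2\cdot(1/4)/3 = 1/6$. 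I would then apply \Cref{lemma-unique-wins} with $k_1 = \log(12N)$, for which the probability that some fixed other element collides with the heavy hitter on all $k_1$ functions is $2^{-k_1} = 1/(12N)$, so a union bound over the $N-1$ remaining elements bounds the collision probability by $(N-1)/(12N) < 1/12$. A union bound over these two failure events shows a trial is good except with probability at most $1/6 + 1/12 = 1/4$, so each trial independently succeeds with probability at least $3/4$.

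Finally I would amplify. Because the $k_2$ trials use independently drawn hash matrices and independent noise, the number $G$ of good trials stochastically dominates a $\mathrm{Binomial}(k_2,3/4)$ random variable, and a Chernoff/Hoeffding bound gives $\Pr[G \le k_2/2] \le \exp(-k_2/8) = \exp(-\log(1/\beta)) \le \beta$ for $k_2 = 8\log(1/\beta)$. Whenever a strict majority of trials are good, each of them reports the heavy hitter by the structural point above, so the heavy hitter occurs more than $k_2/2$ times among the $k_2$ candidate outputs and is therefore the unique mode chosen by the final vote (discarding the $\perp$ symbols only helps). Hence with probability at least $1-\beta$ the mechanism returns exactly the heavy hitter, which is the claimed $(0,\beta)$-accuracy.

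I expect the main obstacle to be the bookkeeping that simultaneously keeps the per-trial failure a \emph{constant} strictly below $1/2$ and matches the theorem's explicit threshold. The delicate point is that the noise scale $b(k_1,k_2)$ must already carry the full $k_2 = 8\log(1/\beta)$ dependence demanded by privacy composition across all trials, while the \emph{accuracy} analysis of a single trial should be run at a constant failure level rather than at $\beta$; it is precisely this constant choice that keeps the leading deviation term of the threshold at order $\sqrt{c\log N}$ instead of $\sqrt{c\log N / \beta}$, and one must check that the two constant failure contributions genuinely sum below $1/2$ so that the majority-vote Chernoff step closes. The explicit coefficient $8\sqrt{2}$ in the first summand of the stated threshold should be read up to this constant-factor slack.
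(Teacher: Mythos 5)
Your proposal is correct and follows essentially the same route as the paper: invoke \Cref{lemma-heavy-wins,lemma-unique-wins} with constant internal failure parameters (so that $k_1=\log(12N)$ and the stated $v_1$ threshold yield per-trial success probability at least $3/4$), then apply a Chernoff bound over the $k_2 = 8\log(1/\beta)$ independent trials to make the majority vote succeed with probability $1-\beta$. Your write-up is in fact somewhat more careful than the paper's, spelling out explicitly why a ``good'' trial outputs the heavy hitter via the linear system $Hx=b \pmod 2$ and flagging the genuine (harmless) constant-factor mismatch in the $8\sqrt{2c\log(12N)}$ term, both of which the paper glosses over.
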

\begin{proof}
First, note that $k_1$ and the condition have been chosen so that from
\Cref{lemma-heavy-wins,lemma-unique-wins}, for any single trial, the heavy
hitter is always hashed to the larger bucket, and is the unique such element,
with probability at least $3/4$. These two conditions ensure that we are able to
correctly identify the heavy hitter with probability $3/4$ for a single trial.
Now, as the trials are independent, we apply a Chernoff bound to show that out
of $k_2$ Bernoulli variables with success probability $3/4$, the probability
that at least half of them succeed is bounded below by \[ \Pr[\mbox{Majority
Vote Success}] \geq 1 - e^{-2 k_2 (1/4)^2} = 1 - \beta \] by our choice of
$k_2$. Thus, the Bucket mechanism returns the true heavy hitter with probability
at least $1 - \beta$.
\end{proof}

We note that the accuracy guarantee of the bucket mechanism is incomparable to
those of our other mechanisms. While the other mechanisms guarantee (without
conditions) to return an element which occurs within some additive factor
$\alpha$ as frequently as the true heavy hitter, the bucket mechanism always
returns the true heavy hitter, so long as a certain condition on $v$ is
satisfied. When the condition is not satisfied, the algorithm comes with no
guarantees. The condition is roughly that the heavy hitter should occur more
frequently than the $\ell_2$-norm of the frequencies of all other elements.
Depending on the distribution over elements, this condition can be satisfied
when the heavy hitter occurs with frequency as small as $\tilde{O}(\sqrt{n})$,
or can require frequency as large as $\Omega(n)$. Finally, we note that this
condition is not unreasonable. It will, for example, be satisfied with high
probability if the frequency of the database elements is follows a power law
distribution, such as a Zipf distribution.

\section{Discussion and Open Questions}
We have initiated the study of the \emph{private heavy hitters} problem in the
fully distributed (local) privacy model. We have provided an (almost) tight
characterization of the accuracy to which the problem can in principle be
solved. In particular, we have separated the local privacy model from the
centralized privacy model: we have shown that even the easier problem of simply
releasing the approximate count of the heavy hitter cannot be accomplished to
accuracy better than $\Omega(\sqrt{n})$ in the local model, whereas this can be
accomplished to $O(1)$ accuracy in the centralized model. We have also given
several efficient algorithms for the heavy hitters problem, but these algorithms
do not in general achieve the tight $\tilde{O}(\sqrt{n\log |N|})$ accuracy bound
that we have established is possible in principle. We leave open the question of
whether there exist \emph{efficient algorithms} in the local model which can
solve the heavy hitters problem up to this information theoretically optimal
bound.

\section*{Acknowledgments}
We would like to thank Raef Bassily and Adam Smith for pointing out an error in
an earlier version of this work.
We would like to thank Martin Strauss for providing valuable clarifications and
insights about~\cite{GSTV07}, and the anonymous reviewers for their helpful suggestions. We
would also like to thank Andreas Haeberlen for suggesting that we study the
heavy hitters problem in the fully distributed setting, and Andreas, Marco Gaboardi, Benjamin
Pierce, and Arjun Narayan for valuable discussions.

\bibliographystyle{alpha}
\bibliography{heavyhitters}
\end{document}